\documentclass[letterpaper, 10pt, conference]{ieeeconf}

\usepackage{cite}
\usepackage{amsmath,amssymb,amsfonts}
\usepackage{graphicx}
\usepackage{textcomp}
\usepackage{stmaryrd}
\usepackage{tabularx}
\usepackage{multirow}
\usepackage{float}
\usepackage{mathtools}
\usepackage{comment}

\usepackage{color}

\newcommand{\tnf}[1]{\textnormal{#1}}
\newcommand{\tbf}[1]{\textbf{#1}}

\newcommand{\R}{\mathbb{R}}
\newcommand{\N}{\mathbb{N}}

\newcommand{\norm}[1]{\left\lVert{#1}\right\rVert}

\newcommand{\bmat}[1]{\begin{bmatrix}#1\end{bmatrix}}

\newtheorem{thm}{Theorem}
\newtheorem{defn}[thm]{Definition}
\newtheorem{lem}[thm]{Lemma}

\newtheorem{cor}[thm]{Corollary}

\floatstyle{ruled}
\newfloat{block}{thbp}{lop}
\floatname{block}{Block}

\floatstyle{ruled}
\newfloat{block}{thbp}{lop}
\floatname{block}{Block}

\usepackage{upgreek}

\newcommand\blfootnote[1]{%
	\begingroup
	\renewcommand\thefootnote{}\footnote{#1}%
	\addtocounter{footnote}{-1}%
	\endgroup
}

\title{\LARGE \bf
	An Exact Lyapunov Characterization of Regional Rate Performance\\ for Rational Stability
}

\author{Declan S. Jagt, Matthew M. Peet %
}

\begin{document}

	\maketitle
	\pagestyle{plain}

\begin{abstract}
	Rational stability is a quantitative stability notion for nonlinear ordinary differential equations (ODEs), under which convergence is characterized by a rationally-in-time decaying bound on solutions. Although Lyapunov characterizations of rational stability have been proposed, no existing condition has been shown to exactly characterize rational rate performance.  The paper resolves this issue by providing an exact converse Lyapunov characterization of regional rational stability with prescribed rate performance. To test the resulting converse Lyapunov conditions numerically, a tiered set of convex relaxations is proposed which can be enforced using Sum-of-Squares (SOS) programming. A conjectured conservatism scaling factor in rate performance is associated with each tier and validated numerically. The first proposed tier appears to be novel and is shown to outperform classical SOS-based approaches. Numerical examples are used to validate the results and compute nested regions of state-space on which prescribed levels of performance can be guaranteed.
\end{abstract}

\blfootnote{\vspace*{-0.00cm}%
	\tbf{Acknowledgment:} This work was supported by National Science Foundation grants 2429973 and 2606245.} 


\vspace*{-0.2cm}
\section{Introduction}

Nonlinear ordinary differential equations (ODEs) are used to model a wide range of physical and engineered systems. Stability analysis of such systems typically seeks to establish convergence of solutions to an equilibrium. In many applications, however, it is also important to quantify the \emph{rate} of convergence and, for systems that are only locally stable, the region over which a prescribed rate can be guaranteed. Quantitative notions of stability address these questions by imposing pointwise-in-time bounds on solutions.

Perhaps the best-known quantitative stability notion is exponential stability, which requires solutions to be bounded by an exponentially decaying function. However, many asymptotically stable nonlinear systems are not exponentially stable; in particular, local exponential stability requires the linearization at the equilibrium to be Hurwitz. For systems exhibiting slower convergence, a more appropriate notion may be \emph{rational stability}\footnote{also referred to as weakly intensive behavior~\cite{hahn1963LFs}, and closely related to the notion of polynomial stability as in~\cite{caraballo2001decay}.}, under which solutions satisfy
\begin{equation}\label{eq:rational_bound_intro}
	\norm{x(t)}_2
	\leq
	M^{1/p}\frac{\norm{x(0)}_2^\eta}{\sqrt[p]{1+kt\norm{x(0)}_2^p}},
\end{equation}
for constants $M,k,p,\eta>0$. For $x(0)\neq0$, this bound implies the asymptotic decay $\norm{x(t)}_2=\mathcal{O}(t^{-1/p})$, while additionally quantifying the dependence of the transient behavior on the initial condition. The parameters in~\eqref{eq:rational_bound_intro} can therefore be used to quantify the performance of rationally stable systems. For example, with $M$, $p$, and $\eta$ fixed, the largest admissible value of $k$ can be interpreted as the \emph{rate performance} of the ODE.


Several Lyapunov characterizations have been proposed for verifying rational stability. A classical sufficient condition is the existence of a function $V$ satisfying
\begin{equation}\label{eq:LF_conservative_intro}
	C_{1}\norm{x}_{2}^{r_{1}}\leq V(x)\leq C_{2}\norm{x}_{2}^{r_{2}},
	\quad
	\dot{V}(x)\leq -C_{3}\norm{x}_{2}^{r_{3}},
\end{equation}
for suitable constants $C_i$ and $r_i$~\cite{bacciotti2005LF_book}. Alternatively, rational stability can be certified by a function $V$ satisfying
\begin{equation}\label{eq:LF_alpha_intro}
	c_{1}\norm{x}_{2}^{s_{1}}\leq V(x)\leq c_{2}\norm{x}_{2}^{s_{2}},
	\quad
	\dot{V}(x)\leq -c_{3}V(x)^{1+s_{3}},
\end{equation}
for suitable constants $c_i$ and $s_i$~\cite{jammazi2013rational}. Both sets of conditions provide equivalent Lyapunov characterizations of rational stability: feasibility of either set for suitable constants implies a solution bound of the form~\eqref{eq:rational_bound_intro} for some $p$, $M$, $k$, and $\eta$, while rational stability conversely implies the existence of Lyapunov functions satisfying each set of conditions~\cite{bacciotti2005LF_book,jammazi2013rational}. Thus, these characterizations determine whether an ODE is rationally stable for \emph{some} choice of the parameters in~\eqref{eq:rational_bound_intro}.

In the context of \emph{quantifying} stability performance, however, two additional questions arise. First, are these Lyapunov conditions also necessary for a \emph{prescribed} level of performance on a prescribed region? That is, if~\eqref{eq:rational_bound_intro} holds for specified $M$, $k$, $p$, and $\eta$ and for initial conditions in a region $G$, does there necessarily exist a Lyapunov function certifying the same performance on the same region? Second, how can such a Lyapunov function be found computationally with minimal conservatism? The central objective is to address these two questions by developing necessary and sufficient Lyapunov conditions for certifying prescribed rational stability performance, together with a computational test for their verification.

Regarding the first question, although converse results have been established for the conditions in~\eqref{eq:LF_conservative_intro} and~\eqref{eq:LF_alpha_intro}, these results do not recover the exact performance parameters in~\eqref{eq:rational_bound_intro}. For example, the proof of~\cite[Thm.~5.3]{bacciotti2005LF_book} shows that feasibility of~\eqref{eq:LF_conservative_intro} implies rational stability with rate $k=(\frac{r_3}{r_2}-1)\frac{C_3}{C_2}$, whereas, starting from a rational bound with rate $k$, the converse construction yields constants satisfying $(\frac{r_3}{r_2}-1)\frac{C_3}{C_2}=\frac{k}{M^2}$. Thus, the converse construction need not recover the original rate when $M>1$. Furthermore, necessity of the Lyapunov characterization in~\eqref{eq:LF_alpha_intro} has been established through its relationship with the characterization in~\eqref{eq:LF_conservative_intro}, rather than through an explicit converse construction that preserves the prescribed performance parameters.

To address this question, we adapt an approach developed by Gr\"une for quantitative input-to-state stability analysis~\cite{grune2002KLD_functions,grune2004KLD,grune2004quantitativeISS}. In this framework, trajectory bounds are expressed in terms of a state measure $\alpha$ and a comparison function $\beta$ as $\alpha(x(t))\leq\beta(\alpha(x(0)),t)$, where $\beta$ is the solution map of a scalar differential equation. The comparison principle can then be used to construct converse Lyapunov functions that certify the prescribed trajectory bound up to arbitrary accuracy. To apply this framework to rational stability, we fix $\eta=1$ and define $\alpha(x):=\norm{x}_2^p$ and $\beta(y,t):=\frac{y}{1+yt}$, where $\beta$ is the solution map of $\dot y=-y^2$. The rational stability bound in~\eqref{eq:rational_bound_intro} can then be equivalently expressed as
\begin{equation*}
	\alpha(x(t))\leq M\beta\left(\alpha(x(0)),kt\right).
\end{equation*}
Adapting the construction in~\cite{grune2002KLD_functions} yields a necessary and sufficient Lyapunov characterization for fixed $p$, $k$, and $M$, similar in form to~\eqref{eq:LF_alpha_intro}. To ensure continuity of the converse Lyapunov function, we allow an arbitrarily small relaxation of the rate and establish the existence of a suitable converse function for every $\bar{k}\in[0,k)$. We further show that this function need only be defined on an appropriate invariant region $G$, yielding a regional converse characterization tied directly to the prescribed values of $p$, $k$, and $M$. Such a Lyapunov characterization is also developed for more general quantitative stability notions in the preprint~\cite{jagt2026ConverseLF}---here, we focus on and develop in detail the case of rational stability.


The second question concerns the computational verification of such performance guarantees. The classical conditions in~\eqref{eq:LF_conservative_intro} have an important advantage in this regard: they are affine in the unknown function $V$ and are therefore amenable to convex optimization. Moreover, for polynomial ODEs, existence of a Lyapunov function satisfying these conditions implies existence of a polynomial Lyapunov function~\cite{leth2017rational_stability}, making them well suited to Sum-of-Squares (SOS) programming. However, these conditions may be conservative when certifying a prescribed level of rate performance. In contrast, the performance-oriented conditions in~\eqref{eq:LF_alpha_intro} more closely capture the rational decay bound but are nonlinear in $V$, making computational search considerably more difficult. Verification of these conditions has instead typically relied on problem-specific Lyapunov constructions, with application to several classes of nonlinear control systems, including bilinear systems~\cite{zaghdoudi2016rational}, systems with drift~\cite{zaghdoudi2017partial_rational_stabilizability}, time-varying systems~\cite{jammazi2019rational_stabilization}, and time-delay systems~\cite{echi2018rational_timedelay}.\footnote{Related Lyapunov arguments have also been used to characterize polynomial stability and decay bounds for homogeneous systems in~\cite{nakamura2002smoothLFs,sun2022polynomial_stability,jammazi2021rational_observation}.} This reliance on problem-specific constructions limits the direct use of these conditions for computational stability analysis and controller synthesis for more general nonlinear ODEs.

We therefore seek Lyapunov conditions that retain the computational tractability of~\eqref{eq:LF_conservative_intro} while reducing its conservatism in rate performance. To this end, we introduce the intermediate dissipation condition
\begin{equation*}
	\dot{V}(x)\leq -kV(x)\norm{x}_{2}^{p}.
\end{equation*}
This condition remains affine in the unknown Lyapunov function \(V\) and can be tested using convex optimization, while more directly reflecting the state-dependent decay in~\eqref{eq:rational_bound_intro}. We compare its conservatism with that of the classical and performance-oriented Lyapunov conditions, introduce a quantitative measure of this conservatism, and develop an SOS formulation for computational verification. An overview of the methodology is provided in Subsection~\ref{subsec:rate_performance}.



\section{Problem Formulation and Methodology}

\subsection{Notation}

Let $\R_{+}:=[0,\infty)$. Denote by $\R_{d}^{m}[x]$ the set of $\R^m$-valued polynomials of degree at most $d\in\N$ in $x\in\R^{n}$.
For a given vector field $f:\Omega\to\R^{n}$ on a domain $\Omega\subseteq\R^{n}$, let $\phi_{f}(x,t)$ denote the solution at time $t$ of the corresponding ODE initialized at $x$, so that
\begin{align*}
	\frac{\partial}{\partial t}\phi_{f}(x,t)
	&=f(\phi_{f}(x,t)),
	&
	\phi_{f}(x,0)&=x.
\end{align*}
Throughout the paper, we assume that the solutions under consideration exist uniquely for all $t\in\R_{+}$ and depend continuously on their initial conditions. Standard conditions ensuring local existence, uniqueness, and continuous dependence are given, e.g., in~\cite[Thms.~3.1,~3.4]{khalil2002nonlinear}. We say that $G\subseteq\Omega$ is forward invariant if $x\in G$ implies $\phi_{f}(x,t)\in G$ for all $t\geq 0$.
Uniqueness of solutions ensures that $\phi_f$ satisfies the semigroup property
\begin{equation*}
	\phi_f(\phi_f(x,t),s)
	=
	\phi_f(x,t+s),
	\qquad
	\forall\,t,s\in\R_+.
\end{equation*}
For a continuous function $v:\R\to\R$, we denote by $D_{t}^{+}v(t)$ the upper right-hand (Dini) derivative of $v(t)$ at $t$.
For a continuous function $V:\Omega\to\R$, we then define $\dot{V}(x):=D_{t}^{+}V(\phi_{f}(x,t))|_{t=0}$, so that
\begin{equation*}
	\dot V(x)
	=
	\limsup_{\Delta t\to0^+}
	\frac{V(\phi_f(x,\Delta t))-V(x)}{\Delta t}.
\end{equation*}

\subsection{Problem Formulation: Quantifying Rate Performance for Rationally Stable Systems}\label{subsec:rate_performance}

The goal is to establish necessary and sufficient Lyapunov conditions for quantifying the rate of decay of a rationally stable vector field $f$ on a prescribed invariant set $G$, and to develop Sum-of-Squares (SOS) conditions for computationally verifying this rate performance. To this end, we first formalize the notion of rational stability and its associated rate performance.

Traditionally, a vector field $f$ is said to be rationally stable if there exist $M\geq 1$, $p,\delta>0$, and $\eta\in(0,1]$ such that $\norm{x}_2<\delta$ implies~\cite{bacciotti2005LF_book}
\begin{equation}\label{eq:rational_bound_old}
	\norm{\phi_f(x,t)}_2
	\leq
	M\frac{\norm{x}_2^\eta}
	{\sqrt[p]{1+\norm{x}_2^p t}},
	\qquad \forall t\geq0.
\end{equation}
While this definition establishes rational stability, it requires only the existence of \emph{some} region $G:=\{x\in\R^n\mid\norm{x}_2<\delta\}$ and \emph{some} parameters $M$, $p$, and $\eta$ for which~\eqref{eq:rational_bound_old} holds. It is thus not directly suited to quantifying a prescribed level of performance on a given region. Therefore, we instead consider the following modified notion of rational stability.

\begin{defn}\label{defn:rational_stability}
	For given $G\subseteq\R^n$, $p,k>0$, and $M\geq1$, we say that the vector field $f:\R^n\to\R^n$ is \tbf{rationally stable} on $G$ with power $p$, rate $k$, and gain $M$ if
	\begin{equation*}\label{eq:rational_bound}
		\norm{\phi_f(x,t)}_2^p
		\leq
		M\frac{\norm{x}_2^p}
		{1+\norm{x}_2^p kt},
		\qquad \forall x\in G,~t\geq0.
	\end{equation*}
\end{defn}
\vspace*{2mm}

This definition differs from the conventional definition in~\eqref{eq:rational_bound_old} in three respects. First, rather than requiring the existence of a region on which the rational bound holds, Defn.~\ref{defn:rational_stability} considers a prescribed region $G$. Second, rather than requiring the existence of parameters for which the bound holds, it considers prescribed values of the power, rate, and gain. In doing so, we fix $\eta=1$, since the bound in~\eqref{eq:rational_bound_old} cannot hold globally for $\eta\in(0,1)$, as is evident by considering $t=0$. Finally, Defn.~\ref{defn:rational_stability} introduces an explicit rate parameter $k$ that scales the temporal decay of the bound. For fixed $M$ and $p$, increasing $k$ yields a uniformly tighter bound for every $x\neq0$ and $t>0$, and therefore corresponds to faster guaranteed convergence. This motivates the following notion of rational rate performance.

\begin{defn}
	For given $G\subseteq\R^n$, $p>0$, and $M\geq1$, we define the rational stability \tbf{rate performance} of $f:\R^n\to\R^n$ on $G$ as the supremum of all $k\geq0$ such that $f$ is rationally stable on $G$ with power $p$, rate $k$, and gain $M$.
\end{defn}

Although we use the rate parameter $k$ as the primary measure of performance, the power parameter $p$ in Defn.~\ref{defn:rational_stability} may also be used to quantify the convergence of rationally stable systems.\footnote{The parameter $p$ may also be referred to as the rate of decay~\cite{jammazi2013rational,zaghdoudi2017partial_rational_stabilizability}.} Unlike $k$, however, $p$ does not provide a consistent ordering of convergence rates over arbitrary domains $G$. In particular, for $\norm{x}_2\geq1$, increasing $p$ implies faster decay for $t<1$ but slower decay for $t>1$. In contrast, for fixed $M$ and $p$, increasing $k$ monotonically tightens the rational bound for every $x\neq0$ and $t>0$. Thus, $k$ provides a natural measure of rational rate performance and will be our primary focus, although the results can also be used to assess performance with respect to $p$.

With this notion of rate performance established, Section~\ref{sec:LF_characterization} develops Lyapunov conditions, similar in form to~\eqref{eq:LF_alpha_intro}, that are sufficient to certify prescribed rational performance on an invariant set. Using a Yoshizawa-type construction similar to that in~\cite{grune2002KLD_functions}, we establish a converse result showing that these conditions are also necessary, recovering the rational rate performance up to arbitrary accuracy on the same invariant set. Section~\ref{sec:SOS_conditions} then develops a convex relaxation of this performance-exact characterization that is affine in the unknown $V$ and can therefore be enforced using SOS programming for polynomial systems. We characterize the conservatism of this relaxation relative to both the performance-exact Lyapunov conditions and the classical norm-based conditions in~\eqref{eq:LF_conservative_intro}, and formulate SOS programs for estimating rational rate performance on prescribed regions. Finally, Section~\ref{sec:examples} applies these programs to several rationally stable systems to numerically examine the predicted conservatism and compute regions on which different levels of rational performance can be guaranteed.

\section{An Exact Lyapunov Function Characterization of Rational Stability}\label{sec:LF_characterization}

Having defined a notion of rate performance for rational stability, we now show how this performance can be characterized through a suitable Lyapunov function. Specifically, we first show that a prescribed rate $k$ on a given invariant region $G$ can be certified by a Lyapunov function $V:G\to\R_{+}$ satisfying $M^{-1}\norm{x}_2^{p}\leq V(x)\leq\norm{x}_2^{p}$ and
\begin{equation}\label{eq:dV_bound}
	\dot{V}(x)\leq -kV(x)^2,\qquad \forall x\in G.
\end{equation}
We then establish a converse result showing that rational stability with rate $k$ on $G$ implies the existence of a continuous function satisfying the corresponding Lyapunov conditions up to arbitrary accuracy in the rate. In the following section, we develop computational methods for testing related Lyapunov conditions using SOS programming.

\subsection{Sufficient Lyapunov Conditions for Rational Rate Performance}

To show that a function $V$ satisfying the conditions in~\eqref{eq:dV_bound} certifies rational stability with a prescribed rate $k$, we first establish a standard sufficient condition in terms of the evolution of $V$ along solutions.

\begin{lem}\label{lem:LF_sufficient}
	For $f:\Omega\to\R^{n}$, let $G\subseteq\Omega$ be forward invariant. If there exists a function $V:G\to\R_{+}$ such that
	\begin{align}\label{eq:V_sufficient}
		M^{-1}\norm{x}_2^{p}
		&\leq V(x)\leq \norm{x}_2^{p},
		&&\forall x\in G, \notag\\
		V(\phi_f(x,t))
		&\leq \frac{V(x)}{1+V(x)kt},
		&&\forall x\in G,\quad t\geq0,
	\end{align}
	then $f$ is rationally stable on $G$ with power $p$, rate $k$, and gain $M$.
\end{lem}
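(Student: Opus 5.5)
The argument is a direct chaining of the sandwich bounds on $V$ with the assumed decay inequality, using only that the scalar map $\beta(y,s):=\frac{y}{1+ys}$ is nondecreasing in $y\geq0$ for each fixed $s\geq0$. Fix $x\in G$ and $t\geq0$. Forward invariance of $G$ gives $\phi_f(x,t)\in G$, so the lower bound in~\eqref{eq:V_sufficient} may be applied at $\phi_f(x,t)$, and the decay inequality applies at $x$. Chaining these, we get
\begin{equation*}
	\norm{\phi_f(x,t)}_2^p\leq M\,V(\phi_f(x,t))\leq M\frac{V(x)}{1+V(x)kt}.
\end{equation*}

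It remains to replace $V(x)$ by $\norm{x}_2^p$ on the right-hand side. Since
\begin{equation*}
	\frac{\partial}{\partial y}\frac{y}{1+ykt}=\frac{1}{(1+ykt)^2}>0
\end{equation*}
for $y\geq0$, the map $y\mapsto\frac{y}{1+ykt}$ is increasing on $\R_{+}$. Alternatively, one can write it as $\frac{1}{1/y+kt}$ for $y>0$ and treat $y=0$ separately. Combining this monotonicity with the upper bound $0\leq V(x)\leq\norm{x}_2^p$ yields
\begin{equation*}
	\frac{V(x)}{1+V(x)kt}\leq\frac{\norm{x}_2^p}{1+\norm{x}_2^p kt}.
\end{equation*}
Multiplying by $M$ and combining with the previous display gives exactly the bound in Defn.~\ref{defn:rational_stability}, for all $x\in G$ and $t\geq0$.

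There is no substantive obstacle. The only points requiring care are to use nonnegativity of $V$ and $k$, so that the denominators are positive and the monotonicity applies, and to invoke forward invariance before evaluating the lower bound on $V$ at $\phi_f(x,t)$.
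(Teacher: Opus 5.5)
Your proof is correct and follows the paper's argument essentially step for step: chain the lower bound on $V$ at $\phi_f(x,t)$, the decay inequality, and the monotonicity of $y\mapsto\frac{y}{1+ykt}$ together with $V(x)\leq\norm{x}_2^p$, then multiply by $M$. You also invoke forward invariance explicitly to justify evaluating the lower bound at $\phi_f(x,t)$, which the paper leaves implicit.
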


\begin{proof}
	Suppose there exists a function $V$ satisfying~\eqref{eq:V_sufficient}, and define $\beta_{k}(y,t):=\frac{y}{1+ykt}$ for $y,t\geq0$. For every fixed $t\geq0$, $\beta_{k}(y,t)$ is monotonically nondecreasing in $y$. Therefore, for all $x\in G$ and $t\geq0$,
	\begin{align*}
		M^{-1}\norm{\phi_f(x,t)}_2^{p}
		&\leq V(\phi_f(x,t))
		\leq \beta_{k}(V(x),t)\\
		&\leq \beta_{k}(\norm{x}_2^{p},t)
		=\frac{\norm{x}_2^{p}}
		{1+\norm{x}_2^{p}kt}.
	\end{align*}
	Multiplying by $M$ yields precisely the rational stability bound in Defn.~\ref{defn:rational_stability}.
\end{proof}

Lemma~\ref{lem:LF_sufficient} provides a sufficient condition for rational stability in terms of the evolution of $V$ along trajectories. We next use the comparison principle to replace this trajectory-wise condition with a pointwise condition on the derivative $\dot{V}(x)$. Specifically, we use the following result, which follows directly from Lem.~3.4 of~\cite{khalil2002nonlinear}.

\begin{lem}\label{ref:comparison}
	For $\Omega\subseteq\R^{n}$ and $f:\Omega\to\R^{n}$, let $G\subseteq\Omega$ be forward invariant. For $\rho:\R_{+}\to\R$, let $\beta:\R_{+}\times\R_{+}\to\R_{+}$ denote the unique solution map of the scalar differential equation defined by $\rho$, so that $\partial_t\beta(y,t)=\rho(\beta(y,t))$ and $\beta(y,0)=y$. Let $V:G\to\R_{+}$ be continuous. If
	\begin{equation*}
		\dot V(x)\leq\rho(V(x)),
		\qquad \forall x\in G,
	\end{equation*}
	where $\dot{V}(x):=D_{t}^{+}V(\phi_{f}(x,t))|_{t=0}$, then
	\begin{equation*}
		V(\phi_f(x,t))
		\leq\beta(V(x),t),
		\qquad
		\forall x\in G,\quad t\geq0.
	\end{equation*}
\end{lem}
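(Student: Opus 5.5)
Fix $x\in G$ and set $v(t):=V(\phi_f(x,t))$ for $t\geq0$. We reduce to the scalar comparison lemma (Lem.~3.4 of~\cite{khalil2002nonlinear}), which compares a continuous $v$ with $D_t^+v(t)\leq\rho(v(t))$ against the solution of $\dot u=\rho(u)$, $u(0)=v(0)$. Both hypotheses need checking: continuity of $v$, and the Dini inequality at every $t\geq0$, not only at $t=0$.

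\textbf{Continuity.} $t\mapsto\phi_f(x,t)$ is continuous because it is a solution of the ODE. Forward invariance of $G$ keeps $\phi_f(x,t)$ in the domain of $V$, and $V$ is continuous on $G$. Hence $v$ is continuous on $\R_+$.

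\textbf{Dini inequality at every time.} Fix $t\geq0$ and put $z:=\phi_f(x,t)\in G$. By the semigroup property,
\begin{equation*}
	v(t+\Delta t)=V(\phi_f(z,\Delta t)).
\end{equation*}
Therefore
\begin{equation*}
	D_t^+v(t)=\limsup_{\Delta t\to0^+}\frac{V(\phi_f(z,\Delta t))-V(z)}{\Delta t}=\dot V(z)\leq\rho(V(z))=\rho(v(t)).
\end{equation*}
This is the only place where the semigroup property and forward invariance are used together.

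\textbf{Applying the comparison lemma.} Let $u(t)=\beta(V(x),t)$, which by definition solves $\dot u=\rho(u)$ with $u(0)=V(x)=v(0)$. Lem.~3.4 of~\cite{khalil2002nonlinear} then gives $v(t)\leq u(t)$ on the interval where $u$ exists. That interval is all of $\R_+$, since the statement assumes $\beta:\R_+\times\R_+\to\R_+$ is the unique solution map. This yields $V(\phi_f(x,t))\leq\beta(V(x),t)$ for all $t\geq0$, and since $x\in G$ was arbitrary, the claim holds.

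\textbf{Main obstacle.} The delicate point is matching the hypotheses of Khalil's lemma. That lemma asks for $\rho$ continuous and locally Lipschitz, whereas here uniqueness of $\beta$ is only assumed. I would first try to rely on this uniqueness assumption. If needed, I would note that in the intended application $\rho(y)=-ky^2$, which is locally Lipschitz. Khalil's lemma is also stated on a finite interval $[0,T)$ and with $v$ taking values in the domain of $\rho$. Both are harmless here: $T$ is arbitrary, and $V\geq0$ puts $v$ in $\R_+$.
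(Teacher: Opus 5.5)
Your proof is the paper's route. The paper only asserts that the lemma follows from Lem.~3.4 of Khalil, and your continuity check and semigroup/forward-invariance argument giving $D_t^+v(t)=\dot V(\phi_f(x,t))\leq\rho(v(t))$ at every $t\geq0$ are the hypothesis verification that citation leaves implicit.

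On your flagged obstacle, do not rely on uniqueness of $\beta$ alone, because the lemma fails without regularity of $\rho$. Take $\rho=0$ on $[0,1]$ and $\rho=1$ on $(1,\infty)$; this $\rho$ has a unique solution map. Yet $f\equiv1$ on $G=\R_+$ with $V(x)=1+\min\{x,\tfrac12\}^2$ satisfies the hypothesis, while $V(\phi_f(0,t))=1+t^2>1=\beta(1,t)$ for $t\in(0,\tfrac12]$. So you must import Khalil's local Lipschitz assumption, or at least continuity of $\rho$ via the maximal-solution comparison, as the paper implicitly does and as holds for $\rho(y)=-ky^2$.
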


Applying Lem.~\ref{ref:comparison} with $\rho(y)=-ky^2$, we convert the trajectory-wise condition in Lem.~\ref{lem:LF_sufficient} into a pointwise Lyapunov condition for rational stability with a prescribed rate.

\begin{thm}\label{thm:LF_sufficient}
	For $\Omega\subseteq\R^{n}$ and $f:\Omega\to\R^{n}$, let $G\subseteq\Omega$ be forward invariant. If there exists a continuous function $V:G\to\R_{+}$ such that, for all $x\in G$,
	\begin{align*}
		M^{-1}\norm{x}_2^{p}
		\leq V(x)\leq \norm{x}_2^{p},\qquad
		\dot{V}(x)
		\leq-kV(x)^2,
	\end{align*}
	where $\dot{V}(x):=D_{t}^{+}V(\phi_{f}(x,t))|_{t=0}$, then $f$ is rationally stable on $G$ with power $p$, rate $k$, and gain $M$.
\end{thm}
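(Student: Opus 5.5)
The plan is to chain Lemma~\ref{ref:comparison} with Lemma~\ref{lem:LF_sufficient}. The norm sandwich $M^{-1}\norm{x}_2^{p}\leq V(x)\leq\norm{x}_2^{p}$ is assumed directly in the theorem. It therefore suffices to establish the trajectory-wise bound
\begin{equation*}
	V(\phi_f(x,t))\leq\frac{V(x)}{1+V(x)kt},\qquad\forall x\in G,~t\geq0,
\end{equation*}
after which Lemma~\ref{lem:LF_sufficient} yields rational stability on $G$ with power $p$, rate $k$, and gain $M$.

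To obtain this bound, I will apply Lemma~\ref{ref:comparison} with $\rho(y):=-ky^2$ on $\R_{+}$. The first step is to verify that the scalar equation $\dot y=-ky^2$ has a unique solution map $\beta:\R_{+}\times\R_{+}\to\R_{+}$, and that this map is exactly $\beta_k(y,t)=\frac{y}{1+ykt}$.

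\begin{itemize}
	\item \emph{It solves the equation.} Direct differentiation gives $\partial_t\beta_k(y,t)=-\frac{ky^2}{(1+ykt)^2}=-k\beta_k(y,t)^2$, and clearly $\beta_k(y,0)=y$.
	\item \emph{Nonnegativity and global existence.} For $y\geq0$, $t\geq0$ the denominator satisfies $1+ykt\geq1$, so $\beta_k$ is well defined for all $t\geq0$ and nonnegative.
	\item \emph{Uniqueness.} This follows because $\rho$ is locally Lipschitz.
\end{itemize}

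The remaining hypotheses of Lemma~\ref{ref:comparison} are exactly what the theorem assumes: $G$ is forward invariant, $V:G\to\R_{+}$ is continuous, and $\dot V(x)\leq\rho(V(x))$ on $G$. The lemma then gives $V(\phi_f(x,t))\leq\beta_k(V(x),t)$, which is the required bound.

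I expect no genuine obstacle, since the argument is a direct composition of the two lemmas. The only point needing care is the domain of $\rho$ and $\beta$. Lemma~\ref{ref:comparison} requires $\beta$ to map into $\R_{+}$ and to be defined for all $t\geq0$. This holds only because $V\geq0$, so the comparison solution starts in $\R_{+}$, where $\dot y=-ky^2$ does not blow up in forward time; for negative initial data it would escape in finite time. I will state this explicitly and note that $V(x)\in\R_{+}$ for all $x\in G$, so only such initial values enter.
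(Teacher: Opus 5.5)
Your proposal is correct and matches the paper's proof: the paper likewise applies Lem.~\ref{ref:comparison} with $\rho(y)=-ky^2$ and $\beta_k(y,t)=\frac{y}{1+ykt}$ to obtain $V(\phi_f(x,t))\leq\beta_k(V(x),t)$, and then concludes via Lem.~\ref{lem:LF_sufficient}. Your additional checks on nonnegativity, global existence, and uniqueness of the comparison solution are details the paper leaves implicit, and they are sound.
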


\begin{proof}
	Suppose there exists a function $V$ satisfying the stated conditions. Define $\beta_{k}(y,t):=\frac{y}{1+ykt}$ and $\rho(y)=-ky^2$ for $y,t\geq0$. Then, for all $y,t\geq0$,
	\begin{equation*}
		\partial_{t}\beta_{k}(y,t)=\rho(\beta_{k}(y,t)),\qquad \beta_{k}(y,0)=y.
	\end{equation*}
	Since $\dot V(x)\leq\rho(V(x))$ for all $x\in G$, Lem.~\ref{ref:comparison} gives
	\begin{equation*}
		V(\phi_f(x,t))
		\leq
		\beta_{k}(V(x),t)
		=
		\frac{V(x)}{1+V(x)kt},
		\qquad
		\forall t\geq0.
	\end{equation*}
	The result then follows directly from Lem.~\ref{lem:LF_sufficient}.
\end{proof}

Thm.~\ref{thm:LF_sufficient} shows that rational stability with a prescribed rate $k$ on an invariant set $G$ can be certified by a Lyapunov function satisfying the conditions in~\eqref{eq:dV_bound} together with the stated bounds on $V$. Equivalently, the existence of such a function certifies $k$ as a lower bound on the rational rate performance. In the following subsection, we establish a converse result showing that these conditions introduce no conservatism in the rate, in the sense that rational stability with rate $k$ on $G$ implies the existence of a function satisfying the corresponding Lyapunov conditions for any $\bar{k}<k$.

\subsection{Necessary Lyapunov Conditions for Rational Rate Performance}

Having established sufficient Lyapunov conditions for rational stability with a prescribed rate in Lem.~\ref{lem:LF_sufficient} and Thm.~\ref{thm:LF_sufficient}, we now develop corresponding converse results. We first establish a converse to Lem.~\ref{lem:LF_sufficient}, showing that rational stability with a given rate implies the existence of a function satisfying the trajectory-wise conditions in~\eqref{eq:V_sufficient}.

\begin{lem}\label{lem:LF_necessary}
	For $f:\Omega\to\R^{n}$, let $G\subseteq\Omega$ be forward invariant. If $f$ is rationally stable on $G$ with power $p$, rate $k$, and gain $M$, then there exists a function $V:G\to\R_{+}$ such that, for all $x\in G$,
	\begin{align}\label{eq:V_ineq_N}
		M^{-1}\norm{x}_2^{p}
		&\leq V(x)\leq\norm{x}_2^{p}, \notag\\
		V(\phi_f(x,t))
		&\leq\frac{V(x)}{1+V(x)kt},
		&&\forall t\geq0.
	\end{align}
\end{lem}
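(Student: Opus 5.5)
Following the Yoshizawa-type construction of~\cite{grune2002KLD_functions}, we define $V$ as the smallest value consistent with the comparison bound along the future trajectory. The natural choice works with the reciprocal, since $\beta_k(y,t)=\frac{y}{1+ykt}$ satisfies $1/\beta_k(y,t)=1/y+kt$. We set, for $x\in G\setminus\{0\}$,
\begin{equation*}
	V(x):=\sup_{s\geq0}\,\frac{\norm{\phi_f(x,s)}_2^{p}}{M-\norm{\phi_f(x,s)}_2^{p}ks},
\end{equation*}
suitably interpreted. More cleanly, we use the reciprocal form: define $W(x):=\inf_{s\geq0}\left(\frac{M}{\norm{\phi_f(x,s)}_2^{p}}-ks\right)$ and $V(x):=1/W(x)$. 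We take $V(0)=0$, and $W(x)=+\infty$ whenever the trajectory hits $0$ (the infimum then being over times before that).

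\textbf{Step 1: the bounds on $V$.} Taking $s=0$ gives $W(x)\leq M/\norm{x}_2^p$, so $V(x)\geq M^{-1}\norm{x}_2^p$. For the upper bound, rational stability gives, for every $s\geq0$,
\begin{equation*}
	\norm{\phi_f(x,s)}_2^p\leq \frac{M\norm{x}_2^p}{1+\norm{x}_2^pks},
\end{equation*}
that is, $\frac{M}{\norm{\phi_f(x,s)}_2^p}\geq \frac{1}{\norm{x}_2^p}+ks$. Hence every term in the infimum is at least $\norm{x}_2^{-p}$, so $W(x)\geq\norm{x}_2^{-p}>0$ and $V(x)\leq\norm{x}_2^p$. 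This also shows $W$ is well defined and positive, so $V$ is finite and nonnegative.

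\textbf{Step 2: the decay inequality.} Using the semigroup property,
\begin{align*}
	W(\phi_f(x,t))&=\inf_{s\geq0}\left(\frac{M}{\norm{\phi_f(x,t+s)}_2^p}-ks\right)\\
	&=\inf_{r\geq t}\left(\frac{M}{\norm{\phi_f(x,r)}_2^p}-kr\right)+kt\\
	&\geq W(x)+kt.
\end{align*}
The last step holds because an infimum over $r\geq t$ dominates the infimum over $r\geq0$. Therefore
\begin{equation*}
	V(\phi_f(x,t))\leq\frac{1}{1/V(x)+kt}=\frac{V(x)}{1+V(x)kt}.
\end{equation*}
Forward invariance of $G$ ensures $\phi_f(x,t)\in G$, so $V$ is evaluated only on $G$.

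\textbf{Main obstacle.} The only delicate part is the degenerate bookkeeping: points where $x=0$, and trajectories reaching the origin in finite time, where $M/\norm{\cdot}^p=\infty$. We handle these with the convention $1/0=\infty$ and $V(0)=0$. If $x\neq0$ but $\phi_f(x,t)=0$, then $V(\phi_f(x,t))=0$ and the inequality is trivial. Once the reciprocal formulation is adopted, the rest is algebraic. No continuity is claimed here; continuity is deferred to the later converse to Thm.~\ref{thm:LF_sufficient}, which relaxes $k$ to $\bar k<k$.
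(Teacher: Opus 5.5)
Your proof is correct and is essentially the paper's argument. Your $V=1/W$ is exactly the paper's function $\sup_{t\geq0}\beta_k(M^{-1}\norm{\phi_f(x,t)}_2^p,-t)$ from \eqref{eq:converse_V}, and passing to reciprocals via $1/\beta_k(y,t)=1/y+kt$ simply turns the paper's appeal to monotonicity and the semigroup property of $\beta_k$ into an additive time shift. The only cost is the $1/0$ bookkeeping, which the paper's sup form avoids because terms with $\phi_f(x,t)=0$ there just equal $0$; note also that $W(x)=+\infty$ only at $x=0$, while for a trajectory that reaches the origin the later terms are merely $+\infty$ and drop out of the infimum.
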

\begin{proof}
	Define $\beta_k(y,t):=\frac{y}{1+ykt}$ whenever $y\geq 0$ and $ykt>-1$. On its domain, $\beta_k(y,t)$ is monotonically nondecreasing in $y$ and satisfies the semigroup property $\beta_k(\beta_k(y,t),s)=\beta_k(y,t+s)$ whenever the expressions involved are well-defined. Since $f$ is rationally stable on $G$	with power $p$, rate $k$, and gain $M$,
	\begin{equation*}
		M^{-1}\norm{\phi_f(x,t)}_2^p
		\leq
		\beta_k(\norm{x}_2^p,t),
		\quad
		\forall x\in G,~t\geq0.
	\end{equation*}
	Now, define $V:G\to\R_+$ as
	\begin{equation}\label{eq:converse_V}
		V(x):=
		\sup_{t\geq0}
		\left\{\beta_k\left(
		M^{-1}\norm{\phi_f(x,t)}_2^p,-t\right)\right\}. \tag{*}
	\end{equation}
	We first verify that this expression is well-defined. Since
	\begin{equation*}
		M^{-1}\norm{\phi_f(x,t)}_2^p kt
		\leq
		\beta_{k}(\norm{x}_{2}^{p},t)kt
		=\frac{\norm{x}_2^pkt}{1+\norm{x}_2^pkt}<1,
	\end{equation*}
	for $t>0$, the expression inside the supremum in~\eqref{eq:converse_V} is well-defined.	By monotonicity and the semigroup property of $\beta_k$,
	\begin{equation*}
		\beta_k\left(M^{-1}\norm{\phi_f(x,t)}_2^p,-t\right)
		\leq\beta_k\left(\beta_k(\norm{x}_2^p,t),-t\right)
		=\norm{x}_2^p.
	\end{equation*}
	Thus, the supremum in~\eqref{eq:converse_V} is finite, and $V(x)\leq\norm{x}_2^p$. Moreover, evaluating the argument at $t=0$ gives
	\begin{equation*}
		V(x)
		\geq
		\beta_k\left(
		M^{-1}\norm{\phi_f(x,0)}_2^p,0
		\right)
		=
		M^{-1}\norm{x}_2^p.
	\end{equation*}
	It remains to establish the decay condition. For any $x\in G$ and $t,s\geq0$, the semigroup properties of $\phi_f$ and $\beta_k$ give
	\begin{align*}
		&\beta_k\left(M^{-1}\norm{\phi_f(\phi_f(x,t),s)}_2^p,-s\right)\\
		&\qquad=
		\beta_k\left(M^{-1}\norm{\phi_f(x,t+s)}_2^p,-s\right)\\
		&\qquad=
		\beta_k\left(\beta_k\left(M^{-1}\norm{\phi_f(x,t+s)}_2^p,-(t+s)\right),t\right)\\
		&\qquad\leq
		\beta_k(V(x),t),
	\end{align*}
	where the final inequality follows from the definition of $V$ and the monotonicity of $\beta_k(\cdot,t)$. Taking the supremum over $s\geq0$ yields
	$	V(\phi_f(x,t))\leq	\beta_k(V(x),t)	=\frac{V(x)}{1+V(x)kt},$
	which concludes the proof.
\end{proof}

Lem.~\ref{lem:LF_necessary} shows that the trajectory-wise sufficient conditions of Lem.~\ref{lem:LF_sufficient} are also necessary. In particular, they are satisfied by the Lyapunov function
\begin{equation}\label{eq:converse_V_explicit}
	V(x)
	=
	\sup_{t\geq0}
	\left\{	\frac{M^{-1}\norm{\phi_f(x,t)}_2^p}{1-M^{-1}\norm{\phi_f(x,t)}_2^pkt}\right\}.
\end{equation}
The trajectory-wise condition in~\eqref{eq:V_ineq_N}, however, depends explicitly on the solution map and is therefore generally unsuitable for direct computational verification. Thm.~\ref{thm:LF_sufficient} avoids this dependence by using the comparison principle to replace the trajectory-wise condition with a pointwise inequality on $\dot V$. To obtain a corresponding converse result for this differential condition, we use the following converse implication of the comparison principle.

\begin{lem}\label{lem:comparison_necessary}
	For $\Omega\subseteq\R^n$ and $f:\Omega\to\R^n$, let $G\subseteq\Omega$ be forward invariant. For $\rho:\R_+\to\R$, let $\beta:\R_+\times\R_+\to\R_+$ denote the unique solution map of the ODE defined by $\rho$, so that $\beta(y,0)=y$ and $\partial_t\beta(y,t)=\rho(\beta(y,t))$.
	Let $V:G\to\R_+$ be continuous. If
	\begin{equation*}
		V(\phi_f(x,t))
		\leq
		\beta(V(x),t),
		\qquad
		\forall x\in G,\quad t\geq0,
	\end{equation*}
	then
	\begin{equation*}
		\dot V(x)\leq\rho(V(x)),
		\qquad
		\forall x\in G,
	\end{equation*}
	where $\dot{V}(x):=D_{t}^{+}V(\phi_{f}(x,t))|_{t=0}$.
\end{lem}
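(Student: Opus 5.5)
The plan is to evaluate the Dini derivative directly from its definition and compare the resulting difference quotient with that of $\beta$. Fix $x\in G$. Since the hypothesis holds with $t=\Delta t>0$, we have for every $\Delta t>0$
\begin{equation*}
	\frac{V(\phi_f(x,\Delta t))-V(x)}{\Delta t}
	\leq
	\frac{\beta(V(x),\Delta t)-V(x)}{\Delta t}
	=
	\frac{\beta(V(x),\Delta t)-\beta(V(x),0)}{\Delta t},
\end{equation*}
where the last equality uses $\beta(y,0)=y$. Forward invariance of $G$ guarantees $\phi_f(x,\Delta t)\in G$, so the left-hand side is well defined.

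Next, I would take $\limsup_{\Delta t\to0^+}$ on both sides. The left-hand side becomes exactly $\dot V(x)$ by the definition of the upper right Dini derivative. For the right-hand side, I would use that $t\mapsto\beta(y,t)$ is differentiable with $\partial_t\beta(y,t)=\rho(\beta(y,t))$, since $\beta$ is the solution map of the scalar ODE. Hence the right-hand difference quotient converges, not merely in the limsup sense, to $\partial_t\beta(V(x),0)=\rho(\beta(V(x),0))=\rho(V(x))$. Monotonicity of $\limsup$ then yields $\dot V(x)\leq\rho(V(x))$. Since $x\in G$ was arbitrary, the result follows.

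The only point needing care is the one-sided differentiability of $\beta(y,\cdot)$ at $t=0$. This is where the standing assumption that $\beta$ is the unique solution map enters: $\beta(y,\cdot)$ is a $C^1$ solution on $\R_+$ with initial value $y$, so its right derivative at $0$ equals $\rho(y)$. Note that continuity of $V$ is not actually needed for this direction, and no comparison-lemma argument is required; the proof is a direct passage to the limit. I would also remark that $\dot V(x)$ could in principle be $-\infty$, in which case the inequality holds trivially.
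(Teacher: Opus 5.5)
Your proposal is correct and matches the paper's proof: both bound the difference quotient of $V(\phi_f(x,\cdot))$ at $t=0$ by that of $\beta(V(x),\cdot)$, using $\beta(y,0)=y$. Both then pass to the $\limsup$, using $\partial_t\beta(y,0)=\rho(y)$. Your side remarks are also accurate: continuity of $V$ is not needed for this direction, and the case $\dot V(x)=-\infty$ is trivial.
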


\begin{proof}
	For any $x\in G$, the assumed inequality and the identity $V(\phi_f(x,0))=V(x)=\beta(V(x),0)$ imply
	\begin{align*}
		\dot V(x)
		&=
		\limsup_{\Delta t\to0^+}
		\frac{
			V(\phi_f(x,\Delta t))-V(\phi_f(x,0))}
		{\Delta t}\\
		&\leq
		\limsup_{\Delta t\to0^+}
		\frac{
			\beta(V(x),\Delta t)-\beta(V(x),0)}
		{\Delta t}
		=
		\rho(V(x)).
	\end{align*}
\end{proof}

To apply Lem.~\ref{lem:comparison_necessary}, the Lyapunov function must be continuous. Continuity of the function in~\eqref{eq:converse_V_explicit} is not immediate because its definition involves a supremum over an unbounded time interval. We can resolve this by introducing an arbitrarily small reduction in the certified rate, yielding the following converse Lyapunov theorem.



\begin{thm}\label{thm:LF_necessary}
	For $\Omega\subseteq\R^n$ and $f:\Omega\to\R^n$, let $G\subseteq\Omega$ be forward invariant. Suppose that $f$ is rationally stable on $G$ with power $p$, rate $k>0$, and gain $M$. Then, for every $\bar{k}\in[0,k)$, there exists a continuous function $V:G\to\R_+$ such that, for all $x\in G$,
	\begin{equation*}
		M^{-1}\norm{x}_2^p
		\leq V(x)\leq\norm{x}_2^p,\qquad
		\dot V(x)
		\leq-\bar{k}V(x)^2,
	\end{equation*}
	where $\dot{V}(x):=D_{t}^{+}V(\phi_{f}(x,t))|_{t=0}$.
\end{thm}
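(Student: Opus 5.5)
The plan is to repeat the Yoshizawa-type construction of Lem.~\ref{lem:LF_necessary} with the rate $k$ replaced by $\bar k$, and to use the gap $k-\bar k>0$ to show that the supremum is attained on a bounded time interval, locally uniformly in $x$. Continuity then follows from continuous dependence of $\phi_f$ on initial conditions, and the derivative bound follows from Lem.~\ref{lem:comparison_necessary}.

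\textbf{Step 1 (construction and trajectory-wise bounds).} Since $\beta_{\bar k}(y,t)\geq\beta_k(y,t)$ for $t\geq0$, rational stability with rate $k$ implies rational stability with rate $\bar k$ on $G$. Define
\begin{equation*}
V(x):=\sup_{t\geq0} g(x,t),\qquad g(x,t):=\beta_{\bar k}\left(M^{-1}\norm{\phi_f(x,t)}_2^p,-t\right)=\frac{M^{-1}\norm{\phi_f(x,t)}_2^p}{1-M^{-1}\norm{\phi_f(x,t)}_2^p\bar k t}.
\end{equation*}
Lem.~\ref{lem:LF_necessary}, applied with rate $\bar k$, then gives $M^{-1}\norm{x}_2^p\leq V(x)\leq\norm{x}_2^p$ and $V(\phi_f(x,t))\leq\beta_{\bar k}(V(x),t)$ for all $x\in G$ and $t\geq0$. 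Once continuity of $V$ is established, Lem.~\ref{lem:comparison_necessary} with $\rho(y)=-\bar k y^2$ yields $\dot V(x)\leq-\bar kV(x)^2$. So the whole burden is continuity.

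\textbf{Step 2 (tail estimate).} Write $a:=\norm{x}_2^p$ and $y_t:=M^{-1}\norm{\phi_f(x,t)}_2^p$. The rate-$k$ bound gives $y_t\leq\beta_k(a,t)=a/(1+akt)$. Since $\beta_{\bar k}(\cdot,-t)$ is nondecreasing, a direct computation gives
\begin{equation*}
g(x,t)\leq\beta_{\bar k}\big(\beta_k(a,t),-t\big)=\frac{a}{1+a(k-\bar k)t}.
\end{equation*}
The same computation shows that the denominator $1-y_t\bar k t\geq 1-\bar k t\,a/(1+akt)>0$. Hence $g$ is jointly continuous in $(x,t)$, because $\phi_f$ depends continuously on $(x,t)$.

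\textbf{Step 3 (continuity away from the origin).} This is the main step. Fix $x_0\in G$ with $x_0\neq0$, and choose a neighborhood $U$ of $x_0$ in $G$ on which $0<a_{\min}\leq\norm{x}_2^p\leq a_{\max}$. The function $a\mapsto a/(1+a(k-\bar k)T)$ is increasing in $a$, and $M\geq1$, so for $T$ large enough
\begin{equation*}
\sup_{t\geq T} g(x,t)\leq \frac{a_{\max}}{1+a_{\max}(k-\bar k)T}< M^{-1}a_{\min}\leq M^{-1}\norm{x}_2^p= g(x,0)\qquad\forall x\in U.
\end{equation*}
Thus $V(x)=\max_{t\in[0,T]}g(x,t)$ on $U$. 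A maximum over a compact interval of a jointly continuous function is continuous in $x$, so $V$ is continuous at $x_0$.

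\textbf{Step 4 (continuity at the origin).} If $0\in G$, then $0\leq V(x)\leq\norm{x}_2^p\to0=V(0)$ as $x\to0$, so $V$ is continuous at $0$ as well.

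I expect Step 3 to be the main obstacle. There one must check that the uniformity in $T$ genuinely needs both $\bar k<k$ and the strictly positive lower bound $V\geq M^{-1}\norm{x}_2^p$ near $x_0$. One must also verify the joint continuity of $g$ under the paper's standing continuous-dependence assumption.
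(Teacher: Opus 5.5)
Your proof is correct, and it differs from the paper's only in how continuity of $V$ is obtained. The paper writes the tail estimate as $h(x,t)\leq\frac{1}{(k-\bar k)t}$, which is independent of $x$. The truncations $V_{T_\epsilon}(x)=\max_{t\in[0,T_\epsilon]}h(x,t)$ then satisfy $0\leq V-V_{T_\epsilon}\leq\epsilon$ on all of $G$, so $V$ is a uniform limit of continuous functions. This needs no case distinction at the origin and no use of the lower bound $V\geq M^{-1}\norm{x}_2^p$.

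You instead localize. Near each $x_0\neq0$ you compare the same tail estimate with $g(x,0)=M^{-1}\norm{x}_2^p\geq M^{-1}a_{\min}>0$, which shows that the supremum is \emph{exactly} a maximum over a fixed compact window $[0,T]$. You then treat $x=0$ separately via $0\leq V(x)\leq\norm{x}_2^p$.

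The paper's route is shorter and uniform on $G$. Yours gives the sharper fact that the supremum is attained, and that away from the origin $V$ locally equals a maximum over a fixed compact time interval. Extra regularity of $\phi_f$ therefore passes directly to $V$ there, for instance local Lipschitz dependence on $x$ uniformly on compact time intervals. The uniform-limit argument does not give this, since the Lipschitz constants of $V_{T_\epsilon}$ may grow with $T_\epsilon$.

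Your Step 1 is also a small economy. Noting that rate $k$ implies rate $\bar k$ lets you apply Lem.~\ref{lem:LF_necessary} verbatim for the bounds and the trajectory-wise decay, rather than repeating the semigroup argument as the paper does. For $\bar k=0$ this uses the lemma slightly outside the requirement $k>0$ of Defn.~\ref{defn:rational_stability}, but its proof goes through unchanged.
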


\begin{proof}
	Fix any $\bar{k}\in[0,k)$ and define $\beta_{\bar{k}}(y,t):=\frac{y}{1+y\bar{k}t}$ whenever
	$y\geq0$ and $y\bar{k}t>-1$. Consider the function
	\begin{equation*}
		V(x):=\sup_{t\geq0}\left\{\beta_{\bar{k}}\left(M^{-1}\norm{\phi_f(x,t)}_2^p,-t\right)\right\}.
	\end{equation*}
	We first show that $V$ is well-defined. Since $f$ is rationally stable with rate $k$ and	$\frac{\norm{x}_2^p\bar{k}t}{1+\norm{x}_2^pkt}<1$, both terms below are well-defined and, by monotonicity of $\beta_{\bar{k}}$,
	\begin{align}\label{eq:converse_tail_bound}
		\beta_{\bar{k}}\left(
		M^{-1}\norm{\phi_f(x,t)}_2^p,-t
		\right)
		&\leq \beta_{\bar{k}}\left(\frac{\norm{x}_2^p}{1+\norm{x}_2^pkt},-t\right) \notag\\
		&=\frac{\norm{x}_2^p}{1+\norm{x}_2^p(k-\bar{k})t}. \tag{$\star$}
	\end{align}
	It follows that $V$ is well-defined and bounded above by $\norm{x}_2^p$. Evaluating at $t=0$ also gives $V(x)\geq M^{-1}\norm{x}_2^p$, establishing the upper and lower bounds on $V$.

	We next establish continuity of $V$. Define
	\begin{equation*}
		h(x,t)
		:=
		\beta_{\bar{k}}\left(
		M^{-1}\norm{\phi_f(x,t)}_2^p,-t
		\right),
	\end{equation*}
	so that $V(x)=\sup_{t\geq 0}h(x,t)$.
	For every finite $T>0$, $h$ is continuous on $G\times[0,T]$. Moreover,~\eqref{eq:converse_tail_bound} gives
	\begin{equation*}
		0\leq h(x,t)
		\leq
		\frac{\norm{x}_2^p}
		{1+\norm{x}_2^p(k-\bar{k})t}
		\leq
		\frac{1}{(k-\bar{k})t},
		\quad t>0.
	\end{equation*}
	Thus, $h(x,t)\to0$ as $t\to\infty$ uniformly in $x\in G$. Consequently, for any $\epsilon>0$, there exists $T_\epsilon>0$ such that $t\geq T_\epsilon$ implies $h(x,t)<\epsilon$ for all $x\in G$. Defining
	\begin{equation*}
		V_{T_\epsilon}(x)
		:=
		\max_{t\in[0,T_\epsilon]}h(x,t),
	\end{equation*}
	it follows that $V(x)\leq \max\{V_{T_{\epsilon}}(x),\epsilon\}$ and therefore $0\leq V(x)- V_{T_{\epsilon}}(x)\leq\epsilon$ for all $x\in G$. Hence $V_{T_\epsilon}\to V$ uniformly on $G$ as $\epsilon\to0$. Since $h$ is continuous and $[0,T_\epsilon]$ is compact, $V_{T_\epsilon}$ is continuous, and thus $V$ is continuous as well.
	
	Finally, we establish the dissipation inequality. By a similar semigroup argument as in the proof of Lem.~\ref{lem:LF_necessary}, we have
	\begin{equation*}
		V(\phi_f(x,t))
		\leq
		\beta_{\bar{k}}(V(x),t)
		=
		\frac{V(x)}
		{1+V(x)\bar{k}t}.
	\end{equation*}
	Since $\partial_t\beta_{\bar{k}}(y,t)=-\bar{k}\beta_{\bar{k}}(y,t)^2$, Lem.~\ref{lem:comparison_necessary} then gives
	\begin{equation*}
		\dot V(x)\leq-\bar{k}V(x)^2,
		\qquad
		\forall x\in G,
	\end{equation*}
	which completes the proof.
\end{proof}

Thms.~\ref{thm:LF_sufficient} and~\ref{thm:LF_necessary} together provide an exact Lyapunov characterization of rational rate performance. Specifically, rational stability with rate $k$ implies the existence of a continuous Lyapunov function satisfying the conditions of Thm.~\ref{thm:LF_sufficient} for every $\bar{k}<k$, while feasibility of these conditions at any rate $\bar{k}$ certifies rational stability with that same rate. Consequently, the supremum of the rates certified by these Lyapunov conditions coincides with the rational rate performance. Moreover, if $V$ is differentiable, its upper Dini derivative along solutions reduces to $\dot V(x)=\nabla V(x)^Tf(x)$, allowing the differential Lyapunov condition to be verified without explicit knowledge of the solution map $\phi_f$. In the following section, we develop tractable relaxations of these conditions that can be enforced using SOS programming.

\section{SOS Conditions for Rational Stability Rate Analysis}\label{sec:SOS_conditions}

Having established necessary and sufficient Lyapunov conditions for rational stability on a prescribed domain, we now develop numerical conditions for verifying rational stability for polynomial vector fields as a Sum-of-Squares Program (SOSP). The main challenge is that the derivative condition in Thm.~\ref{thm:LF_sufficient}, $\dot{V}(x)\leq-kV(x)^2$, is nonlinear in the unknown function $V$ and therefore cannot be imposed directly as a convex SOS constraint. Since $V(x)\leq\norm{x}_2^p$, however, this condition can be strengthened to $\dot{V}(x)\leq-kV(x)\norm{x}_2^p$, which is affine in $V$ for fixed $k$. Strengthening it further to $\dot{V}(x)\leq-k\norm{x}_2^{2p}$ recovers a more classical form of the Lyapunov conditions for rational stability; see, e.g.,~\cite{bacciotti2005LF_book}. The following lemma formalizes the relationships among these three classes of Lyapunov conditions.

\begin{lem}\label{lem:rational_LFs_equivalence}
	For $\Omega\subseteq\R^{n}$ and $f:\Omega\to\R^{n}$, let
	$G\subseteq\Omega$ be forward invariant. Consider the following conditions:
	\begin{enumerate}
		\item[\tnf{(i)}]
		There exist $C_1,C_2>0$, $C_3\geq0$, $q>r>0$, and a
		continuous function $V_{\tnf{(i)}}:G\to\R_+$ such that
		\begin{align*}
			C_1\norm{x}_2^r
			\leq V_{\tnf{(i)}}(x)
			&\leq C_2\norm{x}_2^r,\\
			\dot V_{\tnf{(i)}}(x)
			&\leq-C_3\norm{x}_2^q,
			\qquad \forall x\in G.
		\end{align*}
		
		\item[\tnf{(ii)}]
		There exist $c\geq0$, $\gamma,p,r>0$, and a continuous
		function $V_{\tnf{(ii)}}:G\to\R_+$ such that
		\begin{align*}
			\gamma^{-1}\norm{x}_2^r
			\leq V_{\tnf{(ii)}}(x)
			&\leq\norm{x}_2^r,\\
			\dot V_{\tnf{(ii)}}(x)
			&\leq-cV_{\tnf{(ii)}}(x)\norm{x}_2^p,
			\qquad \forall x\in G.
		\end{align*}
		
		\item[\tnf{(iii)}]
		There exist $c_1,c_2>0$, $c_3\geq0$, $r,s>0$, and a
		continuous function $V_{\tnf{(iii)}}:G\to\R_+$ such that
		\begin{align*}
			c_1\norm{x}_2^r
			\leq V_{\tnf{(iii)}}(x)
			&\leq c_2\norm{x}_2^r,\\
			\dot V_{\tnf{(iii)}}(x)
			&\leq-c_3V_{\tnf{(iii)}}(x)^{1+s},
			\qquad \forall x\in G.
		\end{align*}
		
		\item[\tnf{(iv)}]
		$f$ is rationally stable on $G$ with power $p>0$, rate $k>0$,
		and gain $M\geq 1$.
	\end{enumerate}
	Then, the following statements hold:
	\begin{enumerate}
		\item \tnf{(i)} implies \tnf{(iv)} with $p=q-r$, $M=(\frac{C_{2}}{C_{1}})^{\frac{q}{r}-1}$, $k=(\frac{q}{r}-1)\frac{C_{3}}{C_{2}}$;		
		
		\item \tnf{(iv)} implies \tnf{(i)} with $C_{1}=\frac{1}{M}$, $C_{2}=1$, $C_{3}=\frac{\bar{k}}{M^2}$, $r=p$, $q=2p$, for any $\bar{k}\in[0,k)$;
		
		\item \tnf{(ii)} implies \tnf{(iv)} with $M=\gamma^{p/r}$, $k=\frac{cp}{r}$;
		
		\item \tnf{(iv)} implies \tnf{(ii)} with $r=p$, $\gamma=M$, $c=\frac{\bar{k}}{M}$, for any $\bar{k}\in[0,k)$;
		
		\item \tnf{(iii)} implies \tnf{(iv)} with $p=rs$, $M=(\frac{c_2}{c_1})^s$, $k=s\frac{c_3}{c_{2}^{s}}$;
		
		\item \tnf{(iv)} implies \tnf{(iii)} with $c_{1}=\frac{1}{M}$, $c_{2}=1$, $c_{3}=\bar{k}$, $r=p$, and $s=1$, for any $\bar{k}\in[0,k)$.
	\end{enumerate}
\end{lem}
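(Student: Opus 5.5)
The plan is to reduce every sufficiency statement (1), (3), (5) to Thm.~\ref{thm:LF_sufficient} by a monotone power transformation of the given function. Every necessity statement (2), (4), (6) will come from the single converse function of Thm.~\ref{thm:LF_necessary}, whose dissipation inequality we then weaken.

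\emph{Necessity.} Assume (iv) and fix $\bar{k}\in[0,k)$. Thm.~\ref{thm:LF_necessary} gives a continuous $V$ with $M^{-1}\norm{x}_2^p\leq V(x)\leq\norm{x}_2^p$ and $\dot V(x)\leq-\bar{k}V(x)^2$ on $G$.
\begin{itemize}
\item Statement (6) is immediate: take $V_{\tnf{(iii)}}=V$ with $r=p$, $s=1$, $c_1=M^{-1}$, $c_2=1$, $c_3=\bar{k}$.
\item For (4), use $V\geq M^{-1}\norm{x}_2^p$ to get $V^2\geq M^{-1}V\norm{x}_2^p$. Hence $\dot V\leq-\frac{\bar{k}}{M}V\norm{x}_2^p$, which is (ii) with $r=p$, $\gamma=M$, $c=\bar{k}/M$.
\item For (2), use $V^2\geq M^{-2}\norm{x}_2^{2p}$. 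This gives (i) with $C_1=M^{-1}$, $C_2=1$, $C_3=\bar{k}/M^2$, $r=p$, $q=2p$.
\end{itemize}
No further work is needed on this side.

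\emph{Sufficiency.} In each case we set $W:=\psi(V)$ with $\psi$ an increasing power map. We then check the sandwich $M^{-1}\norm{x}_2^p\leq W\leq\norm{x}_2^p$ and a bound $\dot W\leq-kW^2$, and invoke Thm.~\ref{thm:LF_sufficient}. The Dini derivative passes through $\psi$ by the chain rule $\dot W=\psi'(V)\dot V$, which holds because $\psi$ is $C^1$ and increasing on $(0,\infty)$ and $V(\phi_f(x,t))$ is continuous in $t$.
\begin{itemize}
\item For (3), take $W=V_{\tnf{(ii)}}^{p/r}$. Then $\gamma^{-p/r}\norm{x}_2^p\leq W\leq\norm{x}_2^p$, and
\[
\dot W=\tfrac{p}{r}V^{p/r-1}\dot V\leq-\tfrac{cp}{r}W\norm{x}_2^p\leq-\tfrac{cp}{r}W^2,
\]
using $\norm{x}_2^p\geq W$ in the last step.
\item For (1), set $p=q-r$ and $W=(V_{\tnf{(i)}}/C_2)^{p/r}$. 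This gives $(C_1/C_2)^{q/r-1}\norm{x}_2^p\leq W\leq\norm{x}_2^p$. Combining $\dot V\leq-C_3\norm{x}_2^q$ with $\norm{x}_2^q\geq W^{q/p}$ and $(V/C_2)^{p/r-1}=W^{(p-r)/p}$ yields $\dot W\leq-\frac{p}{r}\frac{C_3}{C_2}W^{(p-r+q)/p}$. Since $q=p+r$, the exponent equals $2$, so $k=(\frac{q}{r}-1)\frac{C_3}{C_2}$.
\item For (5), take $W=(V_{\tnf{(iii)}}/c_2)^s$ with $p=rs$. Then $(c_1/c_2)^s\norm{x}_2^p\leq W\leq\norm{x}_2^p$, so $M=(c_2/c_1)^s$ as stated. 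The same chain-rule computation gives $\dot W\leq-sc_3c_2^{s}W^2$.
\end{itemize}

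For (5) the rate obtained this way does not match the stated $k=s\,c_3/c_2^s$. The two agree when $c_2=1$, and for $c_2\geq1$ the stated value is smaller, so it is still certified. For $c_2<1$ the stated value is larger than the one derived, so it is not yet justified. For that case I would re-examine the normalization, and otherwise report the derived rate $s\,c_3c_2^s$.

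The main obstacle is the behaviour of $\psi$ at $V=0$. Where $p/r<1$ or $s<1$, $\psi'$ is unbounded at $0$ and the chain rule is not directly usable. The lower bound $V\geq c\norm{x}_2^r$ shows that $V(x)=0$ only at $x=0$. At that point we need $\dot W(0)\leq0$. I would establish this from $\dot V\leq0$, which keeps $V(\phi_f(0,t))=0$, so the trajectory stays at the origin and $\dot W(0)=0$. Away from the origin the chain rule is valid pointwise.
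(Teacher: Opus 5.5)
Your proposal is correct and follows the paper's proof essentially step for step. You use the same power transformations of $V_{\tnf{(i)}}$, $V_{\tnf{(ii)}}$, $V_{\tnf{(iii)}}$ fed into Thm.~\ref{thm:LF_sufficient} for 1), 3), 5), and the same weakening of the converse function from Thm.~\ref{thm:LF_necessary} for 2), 4), 6). Your Dini-derivative chain rule and your argument at the origin make rigorous a step the paper only does formally.

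The rate $s\,c_3c_2^{s}$ you derive in 5) is exactly what the paper's own proof obtains; it concludes ``rate $k=sc_3c_2^s$''. So the stated $s\,c_3/c_2^{s}$ is a typo. For $c_2<1$ it is false, not merely unjustified. Take $\dot x=-\frac{1}{2}x^3$ with $V_{\tnf{(iii)}}(x)=\frac{1}{2}x^2$, giving $c_1=c_2=\frac{1}{2}$, $s=1$, $c_3=2$. The statement then claims rate $4$ with $M=1$, whereas $x(t)^2=x_0^2/(1+x_0^2t)$ shows the rate is exactly $1$.
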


\begin{proof}
	Statements 1), 3), and 5) follow from Thm.~\ref{thm:LF_sufficient} using the Lyapunov functions $V(x)=(C_{2}^{-1}V_{\tnf{(i)}}(x))^{p/r}$, $V(x)=V_{\tnf{(ii)}}(x)^{p/r}$, and $V(x)=(c_{2}^{-1}V_{\tnf{(iii)}}(x))^{s}$, respectively. Statements 2), 4), and 6) follow from Thm.~\ref{thm:LF_necessary} by taking $V_{\tnf{(i)}}(x)=V_{\tnf{(ii)}}(x)=V_{\tnf{(iii)}}(x)=V(x)$, where $V$ is the converse Lyapunov function from the proof of Thm.~\ref{thm:LF_necessary}. A full proof is provided in Appx.~\ref{appx:Proofs}.
\end{proof}

Lem.~\ref{lem:rational_LFs_equivalence} presents three equivalent classes of Lyapunov conditions for rational stability, in the sense that feasibility of any one implies rational stability for suitable parameters $p$, $k$, and $M$, and rational stability conversely implies feasibility of each class. Conditions~\tnf{(i)} correspond to the classical conditions for rational stability; see, e.g.,~\cite{bacciotti2005LF_book}. For polynomial $f$ on bounded regions, feasibility of these conditions implies feasibility for a polynomial Lyapunov function, making them particularly amenable to SOS programming. However, Conditions~\tnf{(i)} need not preserve prescribed performance parameters. In particular, fixing $r=p$, $q=2p$, and $C_{1}=M^{-1}$, feasibility of \tnf{(i)} with given $C_{2}$ and $C_{3}$ implies rational stability with rate $k=\frac{C_{3}}{C_{2}}$, whereas rational stability with rate $k$ guarantees only $\frac{C_{3}}{C_{2}}\in[0,\frac{k}{M^2})$ through the converse construction. Thus, Conditions~\tnf{(i)} may deviate by a factor of $1/M^2$ when used to certify rational rate performance.

On the other hand, Conditions~\tnf{(iii)} are similar to those used in, e.g.,~\cite{jammazi2013rational} and provide a performance-exact characterization. In particular, taking $c_{1}=M^{-1}$, $c_{2}=1$, $r=p$, $s=1$, and $c_{3}=k$ recovers the conditions of Thm.~\ref{thm:LF_sufficient}. However, the resulting derivative inequality is nonlinear in the unknown $V$, preventing direct formulation as an SOS program.

Conditions~\tnf{(ii)} provide an intermediate characterization that trades some performance conservatism for computational tractability. Statements 3) and 4) show that these conditions may lose a factor of $1/M$ when certifying rational rate performance, improving on the factor of $1/M^2$ associated with Conditions~\tnf{(i)}. At the same time, Conditions~\tnf{(ii)} remain affine in $V$ and can therefore be imposed using convex optimization. They thus provide a potentially less conservative alternative to the classical conditions while remaining suitable for SOS programming for polynomial systems.

To formulate SOS conditions corresponding to Conditions~\tnf{(ii)}, let $\Sigma_{s,d}$ denote the set of SOS polynomials of degree at most $2d$, so that $s\in\Sigma_{s,d}$ implies $s(x)=Z_d(x)^TPZ_d(x)$ for some $P\succeq0$, where $Z_d$ is the vector of monomials of degree at most $d$. For a semialgebraic region $\Omega:=\{x\in\R^n\mid g_i(x)\geq0,\ i=1,\ldots,m\}$, with $g_i\in\R_{d_i}[x]$, we use $s\in\Sigma[\Omega]$ to denote an SOS certificate of nonnegativity on $\Omega$, motivated by Putinar's Positivstellensatz~\cite{putinar1993psatz}. In particular, $s\in\Sigma[\Omega]$ implies $s(x)=s_0(x)+\sum_i s_i(x)g_i(x)$, where $s_0$ and $s_i$ are SOS polynomials of appropriate prescribed degrees.

\begin{cor}\label{cor:rational_stability_SOS}
	Let $d,d',r,p\in\N$ with $r,p$ even, $f\in\R_{d_f}^n[x]$, and $g_i\in\R_{d_i}[x]$, and define $\Omega:=\{x\in\R^n\mid g_i(x)\geq0,\ i=1,\ldots,m\}$. For a prescribed $k\geq0$, suppose there exist $\gamma>0$ and $V\in\R_{2d}[x]$ satisfying
	\begin{align}\label{eq:rational_stability_SOSP}
		V(x)-(x^Tx)^{r/2}
		&\in\Sigma_{s,d}[\Omega],\notag\\
		\gamma(x^Tx)^{r/2}-V(x)
		&\in\Sigma_{s,d}[\Omega],\notag\\[-0.4em]
		-\frac{kr}{p}V(x)(x^Tx)^{p/2}-\nabla V(x)^Tf(x)
		&\in\Sigma_{s,d'}[\Omega].
	\end{align}
	Then, for any $c\geq0$ such that $S_c(V)\subseteq\tnf{int}(\Omega)$ where $S_c(V):=\{x\in\R^n\mid V(x)\leq c\}$, $f$ is rationally stable on $S_c(V)$ with power $p$, rate $k$, and gain $M:=\gamma^{p/r}$.
\end{cor}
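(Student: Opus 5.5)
The plan is to reduce the corollary to statement 3) of Lem.~\ref{lem:rational_LFs_equivalence}, i.e.\ to Thm.~\ref{thm:LF_sufficient}, applied on the set $G:=S_c(V)$. The SOS constraints in~\eqref{eq:rational_stability_SOSP} hold on $\Omega$, and $S_c(V)\subseteq\tnf{int}(\Omega)$. So the pointwise inequalities
\[
(x^Tx)^{r/2}\leq V(x)\leq \gamma (x^Tx)^{r/2},\qquad \nabla V(x)^Tf(x)\leq -\tfrac{kr}{p}V(x)\norm{x}_2^p
\]
hold for every $x\in S_c(V)$. Because $r$ and $p$ are even, $(x^Tx)^{r/2}=\norm{x}_2^r$ and $(x^Tx)^{p/2}=\norm{x}_2^p$.

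Setting $\tilde V:=\gamma^{-1}V$ gives the normalization of Condition~\tnf{(ii)}:
\[
\gamma^{-1}\norm{x}_2^r\leq\tilde V(x)\leq\norm{x}_2^r,\qquad \dot{\tilde V}\leq -\tfrac{kr}{p}\tilde V\norm{x}_2^p.
\]
Since $V$ is a polynomial, $\dot V(x)=\nabla V(x)^Tf(x)$, so the upper Dini derivative agrees with the gradient expression. This is Condition~\tnf{(ii)} with $c=kr/p$.

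Next, define $W:=\tilde V^{p/r}$ and check the hypotheses of Thm.~\ref{thm:LF_sufficient} directly. The sandwich bounds give $\gamma^{-p/r}\norm{x}_2^p\leq W\leq\norm{x}_2^p$, which is the required bound with $M=\gamma^{p/r}$. By the chain rule (valid where $\tilde V>0$, i.e.\ $x\neq0$),
\[
\dot W=\tfrac{p}{r}\tilde V^{p/r-1}\dot{\tilde V}\leq -k\tilde V^{p/r}\norm{x}_2^p=-kW\norm{x}_2^p\leq -kW^2,
\]
where the last step uses $W\leq\norm{x}_2^p$ and $W\geq0$. At $x=0$ we have $\norm{\phi_f(0,t)}_2^r\leq\tilde V(\phi_f(0,t))\cdot\gamma$. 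Since $\tilde V$ is nonincreasing along trajectories near the origin, $x=0$ is an equilibrium, so $\dot W(0)=0$ and the inequality holds trivially there. Thm.~\ref{thm:LF_sufficient} then yields rational stability with power $p$, rate $k$ and gain $M$.

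The main obstacle is that Thm.~\ref{thm:LF_sufficient} requires $G$ to be forward invariant, and invariance of $S_c(V)$ must be established using only the inequalities on $\Omega$. To handle this, suppose a trajectory starting in $S_c(V)$ leaves. Take the first exit time $\tau$. By continuity and openness of $\tnf{int}(\Omega)$, the trajectory stays in $\Omega$ on $[0,\tau+\delta)$ for some $\delta>0$. On that interval $\dot V\leq0$ (because $V\geq0$), so $V(\phi_f(x,t))\leq V(x)\leq c$, which contradicts leaving $S_c(V)$. A little care is needed with the sublevel-set boundary: $V$ nonincreasing keeps $V\leq c$, so the trajectory never leaves $S_c(V)$, and therefore never leaves $\tnf{int}(\Omega)$. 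With invariance settled, the remaining steps are routine.
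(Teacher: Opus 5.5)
Your proof is correct and follows essentially the same route as the paper: rescale to $\tilde V=\gamma^{-1}V$ to obtain Condition~(ii) with $c=kr/p$, then pass to $\tilde V^{p/r}$ (Statement~3) of Lem.~\ref{lem:rational_LFs_equivalence}) and apply Thm.~\ref{thm:LF_sufficient} on the forward-invariant set $S_c(V)$. Your first-exit-time invariance argument and your separate treatment of $x=0$, where $\tilde V^{p/r}$ may fail to be differentiable, supply details that the paper states in a single line or leaves implicit.
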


\begin{proof}
	Suppose $V$ satisfies~\eqref{eq:rational_stability_SOSP} and let $M:=\gamma^{p/r}$. Then $V(x)\leq\gamma\norm{x}_2^r$ and $\nabla V(x)^Tf(x)\leq 0$ for all $x\in\Omega$. Thus, any sublevel set $S_c(V)\subseteq\operatorname{int}(\Omega)$ is forward invariant.
	
	Define $\tilde{V}:=\gamma^{-1}V$. Then, for all $x\in S_c(V)$, 
	\begin{align*}
		\gamma^{-1}\norm{x}_2^r\leq\tilde V(x)&\leq\norm{x}_2^r,	\\
		\nabla\tilde V(x)^Tf(x)&\leq-\frac{kr}{p}\tilde V(x)\norm{x}_2^p,
	\end{align*}
	Thus, $\tilde{V}$ satisfies Condition~\tnf{(ii)} of Lem.~\ref{lem:rational_LFs_equivalence} with $c=kr/p$. By Statement 3) of Lem.~\ref{lem:rational_LFs_equivalence}, there exists $\hat V:S_c(V)\to\R_+$ satisfying $M^{-1}\norm{x}_2^p\leq\hat V(x)\leq\norm{x}_2^p$ and $\nabla\hat V(x)^Tf(x)\leq-k\hat V(x)^2$. Thm.~\ref{thm:LF_sufficient} therefore implies that $f$ is rationally stable on $S_c(V)$ with power $p$, rate $k$, and gain $M$.
\end{proof}

For fixed $k$, the conditions in~\eqref{eq:rational_stability_SOSP} are convex SOS constraints in $V$, $\gamma$, and the SOS multipliers. Moreover, feasibility is monotone in $k$, so the largest certifiable rate can be computed to arbitrary accuracy by bisection over $k$, with a convex SOS problem solved at each iteration.

In the following section, we apply these SOS conditions to several numerical examples to estimate rational rate performance and compute regions on which prescribed levels of performance can be guaranteed.

\section{Numerical Examples}\label{sec:examples}

In this section, we illustrate the proposed Lyapunov conditions by estimating global and local rational rate performance for several nonlinear ODEs. We primarily use Conditions~\tnf{(ii)} of Lem.~\ref{lem:rational_LFs_equivalence}, implemented through the SOSP in Cor.~\ref{cor:rational_stability_SOS}. To assess their conservatism, the resulting lower bounds on rate performance are compared with simulation-based estimates and with bounds obtained using the classical Lyapunov Conditions~\tnf{(i)}. The latter are converted to SOS constraints analogously to Conditions~\tnf{(ii)}, yielding the derivative constraint
\begin{equation}\label{eq:rational_SOS_classical}
	-\frac{kr}{p}M^{\frac{r}{p}} (x^Tx)^{(r+p)/2}-\nabla V(x)^T f
	\in \Sigma_{s,d'}[\Omega].
\end{equation}
For each comparison, the same values of $M$, $p$, $r$, $d$, and $d'$ are
used for both sets of Lyapunov conditions. Bisection over $k$ is used to
compute the greatest lower bound on rate performance for a prescribed gain
$M$. For the first two examples, bisection over $M$ is also used to compute
the smallest gain for which a fixed rate $k$ is feasible. The SOS programs
are parsed using SOSTOOLS~\cite{papachristodoulou2021SOSTOOLS}, and the
resulting semidefinite programs are solved using MOSEK~\cite{mosek}.

\subsection{Conservatism Scaling in Rate Performance}
\label{subsec:examples:conservatism}

We first consider a simple example for which the rational rate performance
and the conservatism of the different Lyapunov conditions can be determined
explicitly:
\begin{equation}\label{eq:ODE_conservatism}
	\dot{x}(t)=f(x(t))
	=\bmat{-\frac{1}{4}x_{1}^{3}-\frac{1}{2}x_{1}x_{2}^{2}-2x_{2}\\
		-\frac{1}{4}x_{1}^{2}x_{2}-\frac{1}{2}x_{2}^{3}+x_{1}}.
\end{equation}
The vector field $f$ is globally rationally stable with power $p=2$, rate
$k=1$, and gain $M=2$. Indeed, the Lyapunov function
$V(x)=\frac{1}{2}x_{1}^{2}+x_{2}^{2}$ satisfies
$\frac{1}{2}\norm{x}_{2}^{2}\leq V(x)\leq\norm{x}_{2}^{2}$ and
\begin{equation*}
	\nabla V(x)^T f(x)
	=-V(x)x_{1}^2-2V(x)x_{2}^2=-V(x)^2.
\end{equation*}
Thus, by Statement 5) of Lem.~\ref{lem:rational_LFs_equivalence}, the system
is rationally stable with $k_{\tnf{(iii)}}=1$ and $M=2$. As shown in Fig.~\ref{fig:conservatism_scaling}, the bound is attained along the trajectory from $x(0)=\tnf{e}_{2}:=(0,1)$, so the corresponding rate performance is tight.

Applying the SOSP in~\eqref{eq:rational_stability_SOSP} with $p=r=2$ and $d=1$ recovers the same Lyapunov function and certifies rational stability with rate $k_{\tnf{(ii)}}=0.5$ and gain $M=2$, exactly matching the predicted $1/M=\frac{1}{2}$ reduction from the true rate. Increasing the degree of the candidate Lyapunov function does not improve this rate. Applying the classical Conditions~\tnf{(i)} similarly yields $k_{\tnf{(i)}}=0.25$, corresponding exactly to the predicted factor $1/M^2=1/4$, and this bound likewise does not improve with higher-degree candidate Lyapunov functions. Fig.~\ref{fig:conservatism_scaling} shows the resulting rational stability bounds $\beta_{\zeta}(1,t)=\frac{2}{1+k_{\zeta}t}$ along with simulated solutions for $x(0)=\tnf{e}_{1}:=(1,0)$ and $x(0)=\tnf{e}_{2}:=(0,1)$.

\begin{figure}[t]
	\centering
	\includegraphics[width=1.0\linewidth]{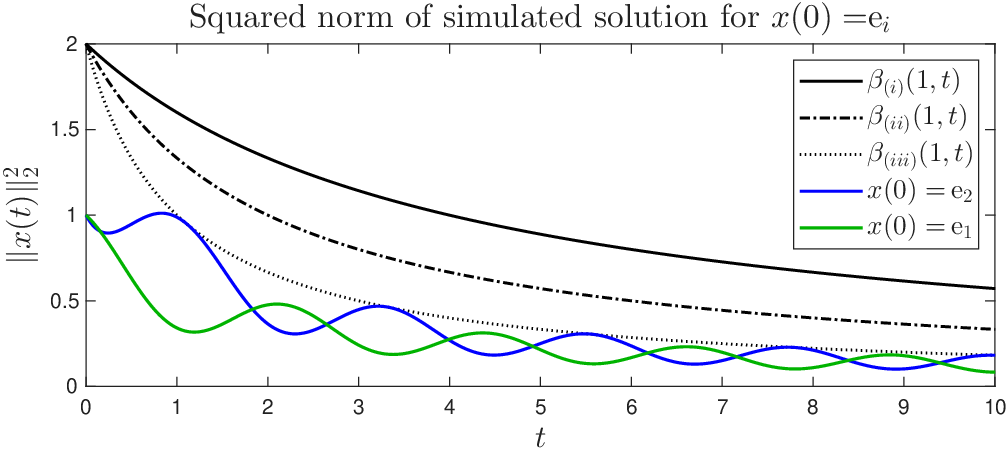}
	\vspace*{-0.6cm}
	\caption{Squared norms of simulated solutions to~\eqref{eq:ODE_conservatism}
		for initial states $x(0)=\tnf{e}_{1}$ and $x(0)=\tnf{e}_{2}$, together
		with the rational stability bounds
		$\beta_{\zeta}(1,t)=\frac{2}{1+k_{\zeta}t}$ for rates
		$k_{\tnf{(i)}}=0.25$, $k_{\tnf{(ii)}}=0.5$, and
		$k_{\tnf{(iii)}}=1$.}
	\label{fig:conservatism_scaling}
	\vspace*{-0.4cm}
\end{figure}

\subsection{Global Rational Stability Rate Performance}
\label{subsec:examples:global}

We next consider the three-dimensional polynomial ODE
\begin{equation}\label{eq:example1}
	\dot{x}=f(x):=\bmat{-10x_{2}^3\\
		10x_{1}^3-5x_{3}^2x_{2}\\
		-10x_{3}^3 +10x_{2}^2x_{1}}.
\end{equation}
To estimate its global rational rate performance, we solve
SOSP~\eqref{eq:rational_stability_SOSP} with $p=2$,
$d\in\{2,\ldots,9\}$, and $r=2d$. The resulting greatest lower bounds
$k_{\tnf{(ii)}}$ are presented in Tab.~\ref{tab:rational_stability_3D},
together with the associated smallest gains $M$. As the degree of the
candidate Lyapunov function increases, the certified rate initially improves
and converges to approximately $k_{\tnf{(ii)}}=0.825$, while the gain
generally decreases toward $M\approx2.8$.

To assess the conservatism of the SOS certificates, we also estimate the rate
performance through simulation. Simulating solutions up to $t=100$ from
5000 initial conditions distributed on the unit sphere, we obtain an estimated
rate $k_{\tnf{sim}}=1.7596$ and associated estimated gain
$M_{\tnf{sim}}=2.748$ consistent with the simulated trajectories.
Figure~\ref{fig:rational_bound_global} shows solutions for initial states
$x(0)=\tnf{u}_i$, where
\begin{equation}\label{eq:global_x0}
	\tnf{u}_{1}:=\bmat{1\\0\\0},\quad
	\tnf{u}_{2}:=\bmat{0\\1\\0},\quad
	\tnf{u}_{3}:=\bmat{-0.93\\ 0.06\\ 0.37},
\end{equation}
with $\tnf{u}_{3}$ corresponding to the sampled trajectory that determines the
simulation-based rate estimate. The rational stability bounds associated
with $k_{\tnf{sim}}$ and $k_{\tnf{(ii)}}$ are also shown, using
$M=M_{\tnf{sim}}$. The ratio
$k_{\tnf{(ii)}}/k_{\tnf{sim}}\approx0.47$ is consistent with the predicted
$1/M$ scaling, for which $1/M_{\tnf{sim}}\approx0.36$.

Using the classical Conditions~\tnf{(i)} with the same polynomial degrees and corresponding values of $M$ yields the bounds $k_{\tnf{(i)}}$ in Tab.~\ref{tab:rational_stability_3D}. The largest certified rate is $k_{\tnf{(i)}}=0.191$, substantially below $k_{\tnf{(ii)}}=0.825$. Moreover, unlike Conditions~\tnf{(ii)}, increasing the degree of the candidate Lyapunov function results in a worse bound on the rate performance. This behavior is likely a result of the additional restrictions imposed by the global SOS implementation of the classical conditions, requiring the Lyapunov function to be a homogeneous polynomial for global certification. Overall, Conditions~\tnf{(ii)} provide substantially tighter estimates of global rational rate performance in this example.

\begin{figure}[t]
	\centering
	\includegraphics[width=1.0\linewidth]{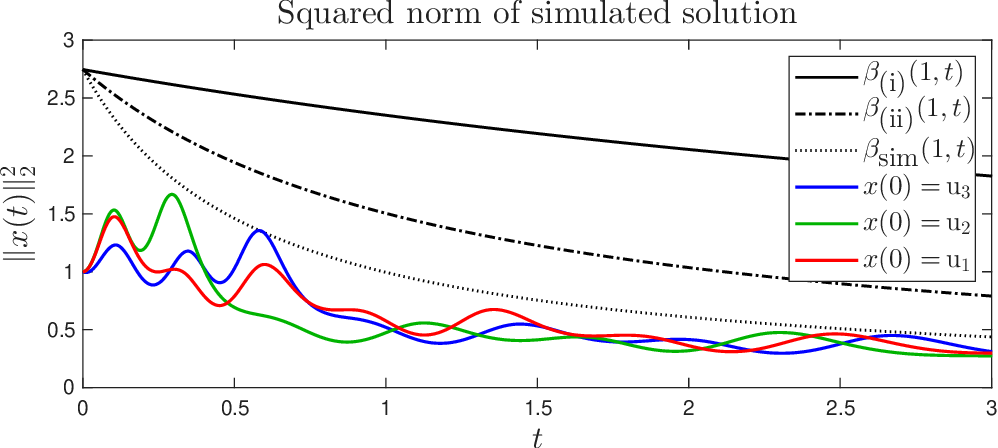}
	\vspace*{-0.6cm}
	\caption{Squared norms of simulated solutions to~\eqref{eq:example1} for
		initial states $x(0)=\tnf{u}_{i}$ from~\eqref{eq:global_x0}, together
		with the rational stability bounds
		$\beta_{\zeta}(1,t)=\frac{2.75}{1+k_{\zeta}t}$ for rates
		$k_{\tnf{(i)}}=0.191$, $k_{\tnf{(ii)}}=0.825$, and
		$k_{\tnf{sim}}=1.76$.}
	\label{fig:rational_bound_global}
	\vspace*{-0.4cm}
\end{figure}

\begin{table}[b]
	\setlength{\tabcolsep}{4pt}
	\begin{tabular}{c|cccccccc}
		$d$ & $2$ & $3$ & $4$ & $5$ & $6$ & $7$ & $8$ & $9$ \\\hline
		$k_{\tnf{(i)}}$ & 0.191 & 0.151 & 0.110 & 0.0617 & 0.0351 &
		0.0201 & 0.0117 & 0.0068\\
		$k_{\tnf{(ii)}}$ & 0.493 & 0.683 & 0.805 & 0.825 & 0.825 &
		0.825 & 0.825 & 0.825\\
		$M$ & 9.108 & 5.141 & 4.407 & 5.814 & 3.304 & 3.031 & 2.829 & 2.793
	\end{tabular}
	\caption{Greatest lower bounds on the global rational rate performance
		of~\eqref{eq:example1} obtained using Lyapunov Conditions~\tnf{(i)}
		($k_{\tnf{(i)}}$) and~\tnf{(ii)} ($k_{\tnf{(ii)}}$), together with
		the smallest gain $M$ for which rate $k_{\tnf{(ii)}}$ can be certified.
		The SOSPs use $r=2d$, $p=2$.}
	\label{tab:rational_stability_3D}
	\vspace*{-0.4cm}
\end{table}

\subsection{Regions of Rational Stability Rate Performance}

\begin{table}[b]
	\setlength{\tabcolsep}{5pt}
	\begin{tabular}{c|cccccccc}
		$R$ & 0.025 & 0.200 & 0.350 & 0.475 & 0.575 & 0.650 & 0.700 & 0.725 \\\hline
		$k_{\tnf{(i)}}$ & 0.211 & 0.212 & 0.177 & 0.176 & 0.169 & 0.153 &
		0.128 & 0.110\\
		$k_{\tnf{(ii)}}$ & 1.203 & 1.152 & 1.056 & 0.939 & 0.697 & 0.442 &
		0.267 & 0.177\\
		$k_{\tnf{sim}}$ & 2.443 & 1.423 & 1.328 & 1.204 & 1.057 & 0.887 &
		0.748 & 0.667
	\end{tabular}
	\caption{Bounds $k_{\tnf{(ii)}}$ on the rational rate performance of~\eqref{eq:example2} over the regions in Fig.~\ref{fig:rationa_stability_ROA}, computed as the largest Lyapunov sublevel sets obtained from SOSP~\eqref{eq:rational_stability_SOSP}	on $\Omega:=\{x\in\R^{2}\mid x_1^4+x_2^4\leq R^{4}\}$ with $p=2$ and $M=3$. Also shown are bounds from Conditions~\tnf{(i)} ($k_{\tnf{(i)}}$) and simulation-based estimates ($k_{\tnf{sim}}$).}
	\label{tab:rational_stability_local}
\end{table}

Finally, consider the locally stable polynomial ODE
\begin{equation}\label{eq:example2}
	\dot{x}=f(x):=\bmat{
		2x_{1}^5 + 2x_{1}^3x_{2}^2 -x_{1}^3 - 8x_{2}^3\\
		x_{1}^4x_{2}^3 + x_{2}^7 + x_{1}^3 - x_{2}^3}.
\end{equation}
We estimate regional rational rate performance by solving SOSP~\eqref{eq:rational_stability_SOSP} on $\Omega:=\{x\in\R^{2}\mid x_1^4+x_2^4\leq R^{4}\}$ for several values of $R$, using $p=2$ and $M=3$. The resulting greatest lower bounds $k_{\tnf{(ii)}}$ are presented in Tab.~\ref{tab:rational_stability_local}. The largest closed level sets of the corresponding Lyapunov functions are shown in Fig.~\ref{fig:rationa_stability_ROA}, providing guaranteed regions of rational stability with gain $M=3$ and rate $k=k_{\tnf{(ii)}}$. A simulation-based estimate of the region of attraction is also shown, together with the Lyapunov level set for $k=0$. As the certified rate decreases, the corresponding region of guaranteed rational performance expands toward the region of attraction.

Simulation-based rate estimates are also reported in Tab.~\ref{tab:rational_stability_local}, obtained from 1000 initial conditions distributed along the boundary of each region. Bounds $k_{\tnf{(i)}}$ from the classical Conditions~\tnf{(i)} are provided for comparison. Across all tested regions, Conditions~\tnf{(ii)} certify substantially larger rates than Conditions~\tnf{(i)}, although the gap between the SOS-certified and simulation-based rates generally increases with the region size. The ratio $k_{\tnf{(ii)}}/k_{\tnf{sim}}$ generally decreases as the region grows, but appears broadly consistent with the predicted $1/M=\frac{1}{3}$ scaling.

\begin{figure}[t]
	\centering
	\includegraphics[width=1.0\linewidth]{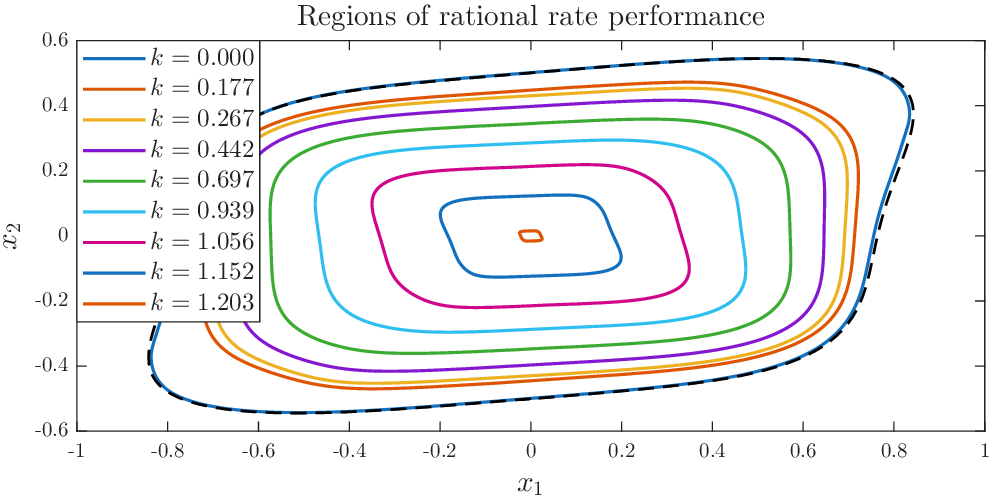}
	\vspace*{-0.6cm}
	\caption{Forward-invariant regions of~\eqref{eq:example2} on which
		rational stability is certified with different rates
		$k=k_{\tnf{(ii)}}$ using SOSP~\eqref{eq:rational_stability_SOSP}.
		A simulation-based estimate of the region of attraction is shown by
		the black dashed curve.}
	\label{fig:rationa_stability_ROA}
	\vspace*{-0.4cm}
\end{figure}

\section{Conclusion}

In this paper, Lyapunov-based methods were developed for quantifying the rate performance of rationally stable nonlinear ODEs. A notion of rational rate performance on prescribed invariant sets was introduced, and an exact converse Lyapunov characterization of this performance was established. In particular, the parameters of a prescribed rational solution bound can be recovered up to an arbitrarily small relaxation of the rate. A tiered set of relaxations was then developed, including a convex condition that is affine in the unknown Lyapunov function and appears less conservative than classical norm-based conditions. For polynomial systems, these conditions were formulated as SOS programs for estimating rational rate performance and certifying prescribed performance on invariant regions. Finally, the predicted conservatism was examined numerically, demonstrating improvements over the classical SOS-based approach and computing nested regions of state-space on which prescribed levels of performance can be guaranteed.


\bibliographystyle{IEEEtran}
\bibliography{bibfile}

\clearpage

\begin{appendices}

\section{Proof of Lemma~\ref{lem:rational_LFs_equivalence}}\label{appx:Proofs}

In this appendix, we restate and prove Lem.~\ref{lem:rational_LFs_equivalence}, establishing three different characterizations of rational stability. In particular, we have the following.

\indent\textit{Lemma~\ref{lem:rational_LFs_equivalence}:}
For $\Omega\subseteq\R^{n}$ and $f:\Omega\to\R^{n}$, let
$G\subseteq\Omega$ be forward invariant. Consider the following conditions:
\begin{enumerate}
	\item[\tnf{(i)}]
	There exist $C_1,C_2>0$, $C_3\geq0$, $q>r>0$, and a
	continuous function $V_{\tnf{(i)}}:G\to\R_+$ such that
	\begin{align*}
		C_1\norm{x}_2^r
		\leq V_{\tnf{(i)}}(x)
		&\leq C_2\norm{x}_2^r,\\
		\dot V_{\tnf{(i)}}(x)
		&\leq-C_3\norm{x}_2^q,
		\qquad \forall x\in G.
	\end{align*}
	
	\item[\tnf{(ii)}]
	There exist $c\geq0$, $\gamma,p,r>0$, and a continuous
	function $V_{\tnf{(ii)}}:G\to\R_+$ such that
	\begin{align*}
		\gamma^{-1}\norm{x}_2^r
		\leq V_{\tnf{(ii)}}(x)
		&\leq\norm{x}_2^r,\\
		\dot V_{\tnf{(ii)}}(x)
		&\leq-cV_{\tnf{(ii)}}(x)\norm{x}_2^p,
		\qquad \forall x\in G.
	\end{align*}
	
	\item[\tnf{(iii)}]
	There exist $c_1,c_2>0$, $c_3\geq0$, $r,s>0$, and a
	continuous function $V_{\tnf{(iii)}}:G\to\R_+$ such that
	\begin{align*}
		c_1\norm{x}_2^r
		\leq V_{\tnf{(iii)}}(x)
		&\leq c_2\norm{x}_2^r,\\
		\dot V_{\tnf{(iii)}}(x)
		&\leq-c_3V_{\tnf{(iii)}}(x)^{1+s},
		\qquad \forall x\in G.
	\end{align*}
	
	\item[\tnf{(iv)}]
	$f$ is rationally stable on $G$ with power $p>0$, rate $k>0$,
	and gain $M\geq 1$.
\end{enumerate}
Then, the following statements hold:
\begin{enumerate}
	\item \tnf{(i)} implies \tnf{(iv)} with $p=q-r$, $M=(\frac{C_{2}}{C_{1}})^{\frac{q}{r}-1}$, $k=(\frac{q}{r}-1)\frac{C_{3}}{C_{2}}$;		
	
	\item \tnf{(iv)} implies \tnf{(i)} with $C_{1}=\frac{1}{M}$, $C_{2}=1$, $C_{3}=\frac{\bar{k}}{M^2}$, $r=p$, $q=2p$, for any $\bar{k}\in[0,k)$;
	
	\item \tnf{(ii)} implies \tnf{(iv)} with $M=\gamma^{p/r}$, $k=\frac{cp}{r}$;
	
	\item \tnf{(iv)} implies \tnf{(ii)} with $r=p$, $\gamma=M$, $c=\frac{\bar{k}}{M}$, for any $\bar{k}\in[0,k)$;
	
	\item \tnf{(iii)} implies \tnf{(iv)} with $p=rs$, $M=(\frac{c_2}{c_1})^s$, $k=s\frac{c_3}{c_{2}^{s}}$;
	
	\item \tnf{(iv)} implies \tnf{(iii)} with $c_{1}=\frac{1}{M}$, $c_{2}=1$, $c_{3}=\bar{k}$, $r=p$, and $s=1$, for any $\bar{k}\in[0,k)$.
\end{enumerate}

\begin{proof}
	We first prove Statements 1), 3), and 5), establishing sufficient conditions for rational stability. 
	
	For Statement 1), suppose $C_i$, $r$, $q$, and
	$V_{\tnf{(i)}}$ satisfy \tnf{(i)}, and let $p$, $M$, and $k$
	be as specified. Define
	\begin{equation*}
		V(x):=
		\left(C_2^{-1}V_{\tnf{(i)}}(x)\right)^{p/r}.
	\end{equation*}
	Since $p=q-r>0$, we have
	\begin{align*}
		V(x)
		&\geq
		\left(\frac{C_1}{C_2}\norm{x}_2^r\right)^{p/r}
		=
		\left(\frac{C_1}{C_2}\right)^{p/r}\norm{x}_2^p
		=M^{-1}\norm{x}_2^p,
	\end{align*}
	as well as
	\begin{equation*}
		V(x)
		\leq
		\left(\frac{C_{2}}{C_{2}}\norm{x}_2^r\right)^{p/r}
		=\norm{x}_2^p.
	\end{equation*}
	Furthermore, using the fact that $\norm{x}_2^q	\geq
	C_2^{-q/r}V_{\tnf{(i)}}(x)^{q/r}$, and $\frac{p}{r}-1+\frac{q}{r}
	=\frac{2p}{r}$, we obtain
	\begin{align*}
		\dot V(x)
		&=
		\frac{p}{r}C_2^{-p/r}
		V_{\tnf{(i)}}(x)^{p/r-1}
		\dot V_{\tnf{(i)}}(x)\\
		&\leq
		-\frac{p}{r}C_3C_2^{-p/r}
		V_{\tnf{(i)}}(x)^{p/r-1}\norm{x}_2^q	\\
		&=
		-\frac{p}{r}\frac{C_3}{C_2}
		\left(C_2^{-1}V_{\tnf{(i)}}(x)\right)^{2p/r}\\
		&=-kV(x)^2.
	\end{align*}
	Thus, $V$ satisfies the conditions of
	Thm.~\ref{thm:LF_sufficient}, and hence $f$ is rationally stable
	on $G$ with power $p$, rate $k$, and gain $M$.
		
	For Statement 3), suppose $c$, $\gamma$, $p$, $r$, and
	$V_{\tnf{(ii)}}$ satisfy \tnf{(ii)}. Define
	\begin{equation*}
		V(x):=V_{\tnf{(ii)}}(x)^{p/r}.
	\end{equation*}
	Then
	\begin{align*}
		V(x)
		&\geq
		\left(\gamma^{-1}\norm{x}_2^r\right)^{p/r}
		=\gamma^{-p/r}\norm{x}_2^p
		=M^{-1}\norm{x}_2^p,
	\end{align*}
	and
	\begin{equation*}
		V(x)
		\leq
		\left(\norm{x}_2^r\right)^{p/r}
		=\norm{x}_2^p.
	\end{equation*}
	Furthermore, since $V(x)\leq\norm{x}_2^p$, we have $-V(x)\norm{x}_2^p\leq -V(x)^2$, and therefore
	\begin{align*}
		\dot{V}(x)
		&=
		\frac{p}{r}
		V_{\tnf{(ii)}}(x)^{p/r-1}
		\dot V_{\tnf{(ii)}}(x)\\
		&\leq
		-\frac{cp}{r}
		V_{\tnf{(ii)}}(x)^{p/r}\norm{x}_2^p
		=
		-\frac{cp}{r}V(x)\norm{x}_2^p
		\leq -kV(x)^2.
	\end{align*}
	Thus, by Thm.~\ref{thm:LF_sufficient}, $f$ is rationally stable
	on $G$ with power $p$, rate $k=cp/r$, and gain
	$M=\gamma^{p/r}$.
		
	For Statement 5), suppose $c_1,c_2,c_3,r,s$ and
	$V_{\tnf{(iii)}}$ satisfy \tnf{(iii)}. Let $p$, $M$, and $k$ be as proposed, and define
	\begin{equation*}
		V(x):=
		\left(c_2^{-1}V_{\tnf{(iii)}}(x)\right)^s.
	\end{equation*}
	Then, the upper and lower bounds on $V_{\tnf{(iii)}}$ in \tnf{(iii)} yield
	\begin{equation*}
		V(x)
		\geq
		\left(\frac{c_1}{c_2}\norm{x}_2^r\right)^s
		=
		\left(\frac{c_1}{c_2}\right)^s\norm{x}_2^{rs}
		=M^{-1}\norm{x}_2^p,
	\end{equation*}
	as well as
	\begin{equation*}
		V(x)
		\leq
		\left(c_2^{-1}c_2\norm{x}_2^r\right)^s
		=\norm{x}_2^p.
	\end{equation*}
	Furthermore,
	\begin{align*}
		\dot{V}(x)
		&=
		sc_2^{-s}
		V_{\tnf{(iii)}}(x)^{s-1}
		\dot V_{\tnf{(iii)}}(x)\\
		&\leq
		-sc_3c_2^{-s}
		V_{\tnf{(iii)}}(x)^{2s}\\
		&=
		-sc_3c_2^s
		\left(c_2^{-1}V_{\tnf{(iii)}}(x)\right)^{2s}
		=
		-sc_3c_2^sV(x)^2
		=-kV(x)^2.
	\end{align*}
	Thus, by Thm.~\ref{thm:LF_sufficient}, $f$ is rationally stable
	on $G$ with power $p=rs$, gain
	$M=(c_2/c_1)^s$, and rate $k=sc_3c_2^s$.
	
	Next, we prove Statements 2), 4), and 6), establishing necessary conditions for rational stability. For each of these statements, suppose that (iv) is satisfied, so that $f$ is rationally stable on $G$ with power $p$, rate $k$, and gain $M$. Then, by Thm.~\ref{thm:LF_necessary}, for any $\bar{k}\in[0,k)$, there exists a continuous $V:G\to\R_+$ such that
	\begin{align*}
		M^{-1}\norm{x}_2^p
		\leq V(x)&\leq\norm{x}_2^p,\\
		\dot V(x)&\leq-\bar{k}V(x)^2.
	\end{align*}
	Now, for Statement 2), let $C_{i}$, $r$, and $q$ be as specified, and let $V_{\tnf{(i)}}:=V$.	Then, the upper and lower bounds on $V_{\tnf{(i)}}$ in \tnf{(i)} follow immediately. Moreover, $V(x)^2	\geq M^{-2}\norm{x}_2^{2p}$, hence
	\begin{equation*}
		\dot V_{\tnf{(i)}}(x)
		\leq-\bar{k}V(x)^2
		\leq-\frac{\bar{k}}{M^2}\norm{x}_2^{2p}
		=-C_3\norm{x}_2^q.
	\end{equation*}
	Thus, \tnf{(i)} holds.
	
	Next, for Statement 4), let $r$, $\gamma$, and $c$ be as defined, and set $V_{\tnf{(ii)}}:=V$. It immediately follows that
	\begin{equation*}
		M^{-1}\norm{x}_{2}^{p}
		=\gamma^{-1}\norm{x}_{2}^{r}
		\leq V(x)
		\leq \norm{x}_{2}^{r}
		= \norm{x}_{2}^{p}.
	\end{equation*}
	Furthermore, we have $V(x)^2\geq M^{-1}V(x)\norm{x}_2^p$, and hence
	\begin{equation*}
		\dot{V}_{\tnf{(ii)}}(x)
		\leq-\bar{k}V(x)^2
		\leq-\frac{\bar{k}}{M}V(x)\norm{x}_2^p
		=-cV_{\tnf{(ii)}}(x)\norm{x}_2^p.
	\end{equation*}
	Thus, \tnf{(ii)} holds.
	
	Finally, for Statement 6), let $c_{i}$, $r$ and $s$ be as proposed, and set $V_{\tnf{(iii)}}:=V$. It follows that
	\begin{equation*}
		c_1\norm{x}_2^r
		\leq V_{\tnf{(iii)}}(x)
		\leq c_2\norm{x}_2^r,
	\end{equation*}
	and
	\begin{equation*}
		\dot V_{\tnf{(iii)}}(x)
		\leq-\bar{k}V_{\tnf{(iii)}}(x)^2
		=-c_3V_{\tnf{(iii)}}(x)^{1+s}.
	\end{equation*}
	Therefore, \tnf{(iii)} holds.
\end{proof}

\end{appendices}

\end{document}